\newtheorem{theorem}{Theorem}
\newtheorem{remark}[theorem]{Remark}
\numberwithin{theorem}{section}
\newenvironment{proof}[1][Proof]{\noindent\textbf{#1.} }{\ \rule{0.5em}{0.5em}}
\newcommand{\bfi}{\bfseries\itshape}
\newcommand{\bz}{{\bf z}}
\newcommand{\rd}{{\rm d}}
\newcommand{\bX}{{\boldsymbol{X}}}
\newcommand{\rem}[1]{}
\newcommand{\pp}[2]{\frac{\partial #1}{\partial #2}}
\newcommand{\dd}[2]{\frac{d #1}{d #2}}
\newcommand{\dede}[2]{\frac{\delta #1}{\delta #2}}
\newcommand{\pa}{{\partial}}
\newcommand{\papa}[2]{\frac{\partial #1}{\partial #2}}
\newcommand{\bzeta}{\boldsymbol{\zeta}}
\newcommand{\ad}{\textrm{\large ad}}
\newcommand{\oiint}{\int\hspace{-.37cm}\text{\large$\bigcirc$}\hspace{-.38cm}\int}
\newcommand{\comment}[1]{\vspace{1 mm}\par
\marginpar{\large\underline{}}\noindent
\framebox{\begin{minipage}[c]{0.95 \textwidth} \color{blue}\bfi #1
\end{minipage}}\vspace{2 mm}\par}
\def\contract{\makebox[1.2em][c]{\mbox{\rule{.6em}
{.01truein}\rule{.01truein}{.6em}}}}
\begin{document}

\title{%\vspace{-.5cm}
The geodesic Vlasov equation\\and its integrable moment closures}
%
%\rem{ %%%%%%%%%%%%%%%%%%%%%%%%%%%%%%%%%%%%%%%%%%%%%%%%%%%%%%%%%%%%%%%%%%%%%%%%%%%%%%%
\author{%\vspace{2mm}
\hspace{-1.1cm}
Darryl D. Holm$^{1,\,2}$, Cesare Tronci$^{3}\!$\\
{\hspace{-1.2cm}\footnotesize $^1$ \it Department of Mathematics, Imperial College London, London SW7 2AZ, UK}\\
{\hspace{-1.2cm}\footnotesize $^2$ \it Institute for Mathematical
Sciences,
Imperial College London, 53 Prince's Gate, London SW7 2PG, UK} \\
{\hspace{-1.2cm}\footnotesize $^3$\,\it Section de Math\'ematiques, \'Ecole Polytechnique F\'ed\'erale de Lausanne, CH-1015 Lausanne, Switzerland}
%\\{\hspace{-1.2cm}\footnotesize $^4$\,\it TERA Foundation for Oncological Hadrontherapy, 11 V. Puccini, Novara 28100, Italy}
\\ \\
}
%}
\date{%\vspace{-.5cm}
\normalsize\today}
\maketitle

%\vspace{-.7cm}
\tableofcontents
\newpage

\begin{abstract}
Various integrable geodesic flows on Lie groups are shown to arise by taking moments of a geodesic Vlasov equation on the group of canonical transformations. This was already known for both the one- and two-component Camassa-Holm systems \cite{GiHoTr2005,GiHoTr2007}. The present paper extends our earlier work to recover another integrable system of ODE's that was recently introduced by Bloch and Iserles \cite{BlIs2006}.
Solutions of the Bloch-Iserles system are found to arise from the Klimontovich solution of the geodesic Vlasov equation. These
solutions are shown to form one of the legs of a dual pair of
momentum maps.%
%%%%%%%%%%%%%%%%%%%%%%%%%%%%%%%%%%
\rem{This dual pair recovers both the Legendre-to-Euler map and the Kelvin-Noether theorem, thereby illustrating two profound geometric aspects of kinetic theory.}
%%%%%%%%%%%%%%%%%%%%%%%%%%%%%%%%%%
The Lie-Poisson structures for the dynamics of truncated moment hierarchies are also presented in this context.
\end{abstract}

\bigskip

\section{Introduction}

Kinetic equations govern the evolution of probability distributions
in the phase space of many-particle systems in non-equilibrium
statistical mechanics. For example, the phase-space probability
distribution of a many-particle system whose correlations are
negligible is governed by the {\it collisionless Boltzmann}
equation, also known as {\it Vlasov equation} \cite{Vl1961}. This
equation encodes evolution of the Vlasov single-particle probability
distribution $f({\bf q},{\bf p},t)$ as conservation along phase
space trajectories, written as
\begin{equation}\label{VlasovChar}
\frac{df}{dt}=0 \qquad\text{along}\qquad \dot{\bf q}=\frac{\pa
h}{\pa \bf p}\,, \quad \dot{\bf p}=-\,\frac{\pa h}{\pa \bf q}\,,
\end{equation}
where $h$ is the single-particle Hamiltonian, often expressed as the sum of kinetic and potential energies
\[
h=\frac{|{\bf p}|^2}{2}+V({\bf q})
\]
in physical applications. Applying the canonical Poisson bracket $\{\cdot,\,\cdot\}$ in the phase space variables ($\bf q,p$) expresses the Vlasov equation in its familiar form
\begin{equation}\label{Vlasov}
\pp{f}{t} +\big\{f,h\big\}=0 \,.
\end{equation}
The Hamiltonian structure of this system is well known \cite{MaWe1981}. Namely, the Vlasov equation possesses a Lie-Poisson bracket defined on the Lie algebra of canonical transformations, such that the Liouville theorem for preservation of the volume on phase space  entirely characterizes Vlasov dynamics.

The moment method, widely used in probability theory, provides approximate descriptions of the Vlasov solutions. Moments are functionals of the distribution function $f$ obtained by projections onto  the space of phase space polynomials (symmetric tensors).
Since the Vlasov distribution depends on both position {\bf q} and momentum {\bf p}, one may define two
different types of moments. These are the {\it kinetic moments} and the {\it statistical moments}. Kinetic moments are given by projection of the
Vlasov distribution $f({\bf q,p})$ onto the $n$-th power of the single-particle momentum, $\bf p$. In contrast, statistical moments are integrals of $f({\bf q,p})$ against the $n$-th power of the full phase space vector, ${\bf z}=({\bf q,p})$. The remarkable property of these two hierarchies of moment projections is that they each define equivariant momentum maps \cite{HoLySc1990,ScWe1994,GiHoTr2008}. Consequently, the resulting moment dynamics is again Lie-Poisson. Moment equations possess interesting closures, which are given by the particular Lie algebra structure determining their Lie-Poisson bracket. For example, kinetic moments of the Vlasov equation at zero-th and first order yield the familiar closure known as \emph{ideal fluid dynamics}.

Remarkably, these kinetic moment equations are associated to a family of  integrable dynamical systems, whose most famous example is probably the Benney system for shallow water
dynamics \cite{Be1973,Gi1981}. This convergence of different areas of mathematical physics also occurs for several other integrable equations. For example, as shown in \cite{GiHoTr2005}, a specific form of the first-order kinetic moment equation yields the Camassa-Holm equation \cite{CaHo1993}. Extending the system to include the zero-th order moment yields another integrable system; the two-component
Camassa-Holm system \cite{GiHoTr2007}. Interestingly enough, these
Camassa-Holm systems are geodesic flows on \emph{different} Lie groups, arising as moment closures of the \emph{same} kinetic equation, called here the {\bfi geodesic Vlasov equation} or EP${Can}$. The latter acronym refers to the Euler-Poincar\'e (EP) equation on the group of canonical transformations $Can$ acting on phase space $T^*Q$.%
%%%%%%%%%%%%%%%%%%%
\footnote{The group of canonical transformations $Can$ is also known as the Hamiltonian diffeomorphisms $Diff_{\rm Ham}$.}
%%%%%%%%%%%%%%%%%%%%
A special case of EP${Can}$ for canonical transformations whose generating functions are \emph{linear} in the canonical momentum has recently appeared in the theory of \emph{metamorphoses} in imaging science \cite{HoTrYo2007}.
%\comment{OK, I mentioned imaging for the first time. More later.}

This paper reviews the theory of geodesic equations on the statistical moments and shows how such equations possess an additional  interesting closure, which is related to the space of purely quadratic Hamiltonian functions.
We find that such a closure yields a particular case of yet another
integrable system, recently discovered by Bloch and Iserles
\cite{BlIsMaRa2005,BlIs2006}. Moreover, extending to inhomogeneous
quadratic Hamiltonians yields complete equivalence between moment
equations and the Bloch-Iserles (BI) system.

\paragraph{Plan} The rest of this section adds a few more remarks about the Lie-Poisson bracket for the Vlasov equation. Section 2 is devoted to the Hamiltonian structure of the Vlasov moments and their truncations. Section 3 formulates the geodesic Vlasov equation, presents its dual pair and illustrates the geometric footing of kinetic theory. Section 4 shows how both of the Camassa-Holm systems are obtained as geodesic equations on kinetic moments. The last section derives the BI system from the statistical moment equations and presents the corresponding Klimontovich solutions.

\subsection{The Vlasov kinetic equation}

The Vlasov equation is a Lie-Poisson Hamiltonian system on the group of canonical transformations of the phase space $T^*Q$ for a configuration manifold $Q$ \cite{MaWe1981}. The dynamics of Lie-Poisson systems takes place on the dual $\mathfrak{g}^*$ of the Lie algebra $\mathfrak{g}$ of the symmetry group $G$.
In this case $G={Can}(T^*Q)$ and
$\mathfrak{g}=\mathfrak{X}_{\rm can}(T^*Q)$. That is, the Lie algebra is the infinite-dimensional space of Hamiltonian vector fields. Given
the Lie algebra isomorphism between Hamiltonian vector fields and
phase-space functions ($\mathfrak{X}_{\rm can}\simeq\mathcal{F}$),
the dynamical variable is a phase-space distribution $f({\bf q,p})$,
i.e., a density on phase space ($f\in\mathcal{F}^*\!\simeq{\rm
Den}$). Upon using the definition of canonical Poisson bracket
$\left\{\cdot,\cdot\right\}$, the Vlasov Lie-Poisson structure is
found to be
\begin{equation}
\{F,H\}[f]=\iint f ({\bf q,p})\left\{\frac{\delta F}{\delta
f},\frac{\delta H}{\delta f}\right\}\,{\rm d}^{K} {\bf q}\ {\rm
d}^{K} {\bf p}
\end{equation}
where $K={\rm dim}(Q)$. The Vlasov equation (\ref{Vlasov}) is
recovered upon choosing $F=f$ and $h=\delta H/\delta f$.

In many physical applications, the Vlasov Hamiltonian is the sum of
kinetic and potential energy. For example, electrostatic or gravitational interactions are governed in the absence of collisions by the
Poisson-Vlasov system whose Hamiltonian is given by
\begin{equation}\label{VlasovPoisson}
H[f]=\iint\! f({\bf q,p})\left(\frac12\,|{\bf
p}|^2+\Delta^{-1\!}\!\int\! f({\bf q,p'})\ {\rm d}^{K\,}{\bf
p}'\right) {\rm d}^K{\bf q}\ {\rm d}^K{\bf p} \,,
\end{equation}
where $\Delta^{-1}$ denotes convolution with the Green's function of
the Laplace operator.

\section{Hamiltonian structure of Vlasov moments}
The moment method is a popular approach in kinetic systems theory.
This approach is justified geometrically because taking moments of
the Vlasov distribution is a momentum map
\cite{HoLySc1990,ScWe1994,GiHoTr2008}. This momentum map arises via
the dual of a Lie algebra homomorphism arising from the well-known
isomorphism between symmetric tensors and polynomials. The main
point is that this momentum map endows the space of symmetric
tensors with a Lie bracket, thereby generating a well defined Lie
algebra. In what follows, we shall analyze the cases of kinetic and
statistical moments separately and then discuss their similarities.

\subsection{Kinetic moments and the Schouten concomitant}

{\bfi Kinetic moments} are constructed from the following fiber integral \cite{QiTa2004}
\begin{multline}\label{FibIntMom}
A_n({\bf q}, t):=\int_{T^*_{\bf q}Q\!}\left({\bf p}\cdot{\rm d}{\bf
q}\right)^n f({\bf q,p},t)\ {\rm d}^K{\bf q}\wedge{\rm d}^K{\bf p}
\\
= \sum_{i_1\dots i_n=1}^K
\int_{T^*_{\bf q}Q}  p_{\,i_1}\dots p_{\,i_n} \
{\rm d}q^{i_1}\otimes\dots\otimes
{\rm d}q^{i_n}\,f({\bf q, p}, t)\ {\rm d}^K{\bf
q}\wedge{\rm d}^K{\bf p}
\\
=\sum_{i_1\dots i_n=1}^K\!
\big(A_n({\bf q}, t)\big)_{i_1\dots i_n\,} \
{\rm d}q^{i_1}\otimes\dots\otimes{\rm d}q^{i_n}
\otimes{\rm d}^K{\bf q} \,,
\end{multline}
where ${\bf p}\cdot{\rm d}{\bf q}$ denotes the canonical one form
(canonical momentum) and ${\rm d}^K{\bf q}$ is the volume element on the configuration space $Q$. This construction projects the Vlasov
distribution onto the space of symmetric tensors. In particular,
kinetic moments are defined as symmetric covariant tensor fields
carrying the volume element. That is, they are {\it symmetric covariant
tensor densities}.

The moments are functionals of the Vlasov density $f$. Hence, their variational derivative may be computed by applying the chain rule as
\begin{align*}
\frac{\delta F}{\delta f}
= \sum_{n=0}^\infty\,\frac{\delta F}{\delta
A_n}
\contract
\frac{\delta
A_n}{\delta f}
&
:=\sum_{n=0}^\infty\,\sum_{\,i_1\dots i_n=1}^K\!\frac{\delta
(A_n)_{i_1\dots i_n}}{\delta f}\,\frac{\delta F}{\delta
(A_n)_{i_1\dots i_n}}
\\&
=\sum_{n=0}^\infty\,\sum_{\,i_1\dots i_n=1}^K
p_{\,i_1}\dots p_{\,i_n}
\,\frac{\delta F}{\delta
(A_n)_{i_1\dots i_n}}
\\&
=: \sum_{n=0}^\infty\, \frac{\delta F}{\delta A_n} \contract\, {\bf
p}^{n\!}
\,,
\end{align*}
which explicitly defines the contraction operation $\contract$.
This chain rule formula expresses the Lie algebra homomorphism (isomorphism) from symmetric tensors to polynomials, whose dual is the momentum map associated to the moments \cite{GiHoTr2008}. Inserting the chain rule formula into the Vlasov bracket yields the Lie-Poisson bracket for moments,
\begin{equation}
\left\{F,G\right\}[A]=- \!\sum_{n,m=0\,}^\infty
%\left\langle
\int\! A_{m+n-1}({\bf q})\,\contract\!\left[ \frac{\delta F}{\delta
A_n},\frac{\delta G}{\delta A_m}\right]
%\right\rangle
\,{\rm d}^3{\bf q}
\label{moment-LPB}
\end{equation}
in which the bracket%
%%%%%%%%%%%
\footnote{The operator $\mathcal{S}$ here takes the symmetric part of its argument.}
\begin{equation}\label{MomSchouten}
\Bigg[ \frac{\delta F}{\delta A_n}\,,\,\frac{\delta G}{\delta
A_m}\Bigg]=\, \text{\Large$\mathcal{S}$}\!\left(n
\,\Bigg(\frac{\delta F}{\delta
A_n}\cdot\nabla\Bigg)\otimes\frac{\delta G}{\delta A_m} \, -\,
m\, \Bigg(\frac{\delta G}{\delta
A_m}\cdot\nabla\Bigg)\otimes\frac{\delta F}{\delta A_n}
\right)
\end{equation}
is inherited from the canonical Poisson bracket.
Here the notation $A\cdot B$ for one-index contraction between covariant and contravariant tensors is written as $(A\cdot B)_{ij...}^{hl...} = A_{ij...k}B^{khl...}$ and analogously for $B\cdot A=(B\cdot A)^{km...}_{jl...} =
B^{km...i}A_{ijl...}$.
%\comment{Please define $\contract$ explicitly for this case.}
This bracket
is well known in differential geometry as an invariant differential
operator of first order \cite{Ni1955}. In fact, this operation is a Lie
bracket, which is known as the {\it Schouten concomitant} or {\it
symmetric Schouten bracket}. See, e.g.,
\cite{GiHoTr2008} for more discussions and references.

\begin{remark}[History of Lie-Poisson structure for kinetic moments]$\quad$\\
In one dimension, the moment Lie-Poisson structure (\ref{moment-LPB}) is the
Kupershmidt-Manin bracket \cite{KuMa1978} which was found in the
context of the integrable Benney system in shallow water theory.
Lebedev was the first to establish its relation with the Lie algebra
of Hamiltonian vector fields in \cite{Le1979} and Gibbons recognized
later \cite{Gi1981} its direct relation to the Vlasov flow. In
higher dimensions, Kupershmidt introduced a multi-index notation
\cite{Ku1987}, corresponding to the occupation number representation
of the symmetric Schouten bracket. This observation suggested the quantum-like framework for kinetic moments in \cite{GiHoTr2008}, where the moment space is
described in terms of a bosonic Fock space.
\end{remark}

The moment algebra comprises symmetric contravariant tensor fields and these may be characterized as the Fock space
represented by a direct sum of symmetric powers of vector fields given by
%\cite{GiHoTr2008}
\begin{equation}
\mathfrak{g}:=\bigoplus_{n=0}^\infty\left(\,\bigvee_{i=0}^{n}\,\mathfrak{X}(Q)\,\right)
\qquad\text{with}\quad
\bigvee_{i=0}^{n}\mathfrak{X}
:=\,\text{\large$\mathcal{S}$}\!\left(\bigotimes_{i=0}^n\,\mathfrak{X}\right)
=:\mathfrak{g}_n
\,.
\label{V-devil-def}
\end{equation}
This is reminiscent of the {\it universal enveloping algebra of the
diffeomorphism group} Diff($Q$), which is the enveloping algebra
$\mathcal{U}(\mathfrak{X})$ of vector fields $\mathfrak{X}(Q)$ on
the configuration space $Q$. It is a standard result that the graded
structure of an enveloping algebra possesses a
 Poisson bracket structure \cite{DaSWe1999}.

\begin{remark}[Kinetic moments and Poisson-Lie groups]$\,$\\
Interestingly enough, the Schouten concomitant identifies the kinetic moment algebra with the Lie algebra of symbols of differential operators. This identification is quite suggestive, since differential symbols are known to be a subalgebra of the Lie algebra of pseudo-differential symbols \cite{KhZa1995}. While the differential symbols (the moment algebra) are supposed to have no underlying Lie group structure, the group ${\Psi}D$ of pseudo-differential operators is a well defined Poisson-Lie group \cite{KhZa1995}. This suggests that the characterization of coadjoint orbits for moment dynamics requires the complete Poisson-Lie group structure of pseudo-differential operators. A similar direction involving vector fields was followed by Ovsienko and Roger \cite{OvRo1999}. Also, the appearance of the Wick-ordered product from quantum theory in this Poisson-Lie group context \cite{KhZa1995,OvRo1999} implies a further relation to moment dynamics, whose quantum-like
creation and annihilation operators were presented in \cite{GiHoTr2008}.
\end{remark}

The moment algebra carries a
graded structure
($\mathfrak{g}=\text{\large$\oplus$}_i\,\mathfrak{g}_i$) with
filtration
\begin{equation}\label{KinMomFiltr}
\big[\,\mathfrak{g}_{n\,},\mathfrak{g}_m\big]\subseteq
\mathfrak{g}_{n+m-1}
\end{equation}
thereby recovering the space of vector fields
$\mathfrak{g}_1=\mathfrak{X}$ as a particular subalgebra. The
\emph{largest} subalgebra is however
$\mathfrak{g}_{0\,}\text{\large$\oplus$}\,\mathfrak{g}_1\simeq\mathfrak{X}\,\circledS\,\mathcal{F}$,
i.e. the semidirect product of vector fields with scalar functions.
This space appears in the description of ideal
compressible fluids, where the $\mathfrak{X}$-variable is the fluid
velocity and the $\mathcal{F}$-variable is associated to the fluid
density. Geodesic motion on this space has also recently appeared in the \emph{metamorphosis} process in imaging science \cite{HoTrYo2007}.

The moment equations may be written in Lie-Poisson form as
\begin{equation}
\pp{A_m}{t}=\,-\!\sum_{m,n=0}^\infty
\ad^*_\text{\!\small$\dede{H}{A_n}$}\, A_{n+m-1}
\end{equation}
where ad$^*$ is the Lie algebraic coadjoint operation defined using the pairing
\begin{equation*} \sum_{k,n=0}^\infty\big\langle\,
\ad^*_{\beta_n} \,A_{k\,},\,\alpha_{k-n+1}\big\rangle:=
\sum_{k,n=0}^\infty\big\langle\,
A_{k\,},\,\left[\beta_n,\,\alpha_{k-n+1}\right]\big\rangle \,.
\end{equation*}
The explicit expression for ad$^*$ is given in \cite{GiHoTr2008} in
any number of dimensions. Here we present the one-dimensional
case which will be needed in the following sections. The Schouten
concomitant in equation (\ref{MomSchouten}) assumes a particularly simple form in 1D
\begin{equation}
\left[\alpha_m,\beta_n\right]\,=\,m\,\alpha_n\,\partial_q\beta_n\,-\,n\,\beta_n\,\partial_q\alpha_m
\end{equation}
where $q$ is the spatial variable. Simple use of integration by
parts yields the following covariant tensor density of rank $k-n+1$:
\begin{equation}
\ad_{\beta_{n}}^{\ast}A_{k}=\left(  k+1\right) \,
A_{k}\,\partial_q \beta_{n} + n\,\beta_{n}\,\partial_q A_{k}
\,.
\label{Kirillov-ad-action}
\end{equation}
This operation was introduced by Kirillov \cite{Ki1982}, who first
envisioned the possibility of a Lie-Poisson bracket on the symmetric
Schouten algebra. Familiar versions of this operator with $n=1$ or
$k=n$ arise in the theory of ideal fluid dynamics, soliton dynamics
and image matching, while very little is known for other values on
$n,k$. Some features of this intriguing open question are
investigated further below, in dealing with truncations of moment
hierarchies.

\subsection{Statistical moments and their Lie-Poisson structure}

As we have seen, the fiber integral defining the {\it kinetic moment}
hierarchy in (\ref{FibIntMom}) requires a kinetic equation on a
cotangent bundle. In contrast, the notion of {\it statistical moments}
is given on a symplectic vector space. Upon denoting $\bf z=(q,p)$,
the definition of the $n$-th statistical moment is given by
\begin{equation}
X^n(t):=\int {\bf z}^n f({\bf z},t)\ {\rm d}^N{\bf z}
\end{equation}
where the upper index $n$ in the integrand denotes tensor power
(${\bf z}^n=\otimes^n{\bf z}$), while for the time-dependent tensors
$X^n(t)$ it denotes the tensor rank. This definition places the statistical moments  and kinetic moments into the same mathematical  framework. The first observation is that
statistical moments are symmetric contravariant tensors on phase
space, which is now a symplectic vector space $V$ of even dimension
$N=2K$ (eventually $V=\mathbb{R}^N$) with elements $\bz=z^i {\bf
e}_i\in V$.

The moment Poisson bracket for statistical moments may be obtained by following exactly the same steps as in the previous discussion for kinetic moments. That is, one inserts the chain rule formula
\begin{eqnarray}
\frac{\delta F}{\delta f}
&=&
\sum_{n=0}^\infty\,
\left(\pp{F}{X^n}\right)_{i_1\dots i_n}\left(\dede{X^n}{f}\right)^{i_1\dots
i_n}
\nonumber\\
&:=& \sum_{n=0}^\infty\ \pp{F}{X^n}\contract\dede{X^n}{f}
=
\,\sum_{n=0}^\infty\, \pp{F}{X^n}\contract \,\bz^{n\!\!}
\label{contract-def}
\end{eqnarray}
(and definition of $\contract$) into the Vlasov structure, which may then be written as
\begin{align*}
\{F,G\}[f]=\iint\! f(\bz)\left\{\frac{\delta F}{\delta
f},\frac{\delta G}{\delta f}\right\}\, {\rm d}^N\bz =\iint\!
f(\bz)\left[\, \mathbb{J}\contract \left( \frac{\partial}{\partial
\bz}\frac{\delta F}{\delta f}\otimes\frac{\partial}{\partial
\bz}\frac{\delta G}{\delta f}\right)\right] {\rm d}^N\bz
\end{align*}
where $\mathbb{J}$ is a non-degenerate two form. That is, $\mathbb{J}$ is a $N\times N$
antisymmetric matrix of maximal rank. Since equation (\ref{contract-def}) implies
\begin{equation}
\frac{\partial}{\partial \bz}\frac{\delta F}{\delta f} = \sum_n\,
n\,\frac{\partial F}{\partial X^n}\contract\,\bz^{n-1}\,,
\end{equation}
it follows that the moment Poisson structure is
\begin{align}\nonumber
\{F,G\}(X)\,&=\!\sum_{n,m=0}^\infty
X^{n+m-2}\contract\left[\frac{\partial F}{\partial
X^n},\frac{\partial G}{\partial X^m}\right]
\\
&=:\!\sum_{n,m=0}^\infty \left\langle
X^{n+m-2\,},\left[\frac{\partial F}{\partial X^n},\frac{\partial
G}{\partial X^m}\right]\right\rangle
\end{align}
where
\begin{equation}\label{mombrkt}
\left[\frac{\partial F}{\partial X^n},\frac{\partial G}{\partial
X^m}\right]:=n\,m\
\text{\large$\mathcal{S}$}\!\left(\frac{\partial F}{\partial
X^n}\cdot \mathbb{J}\cdot \frac{\partial G}{\partial
X^m}\right)
\end{equation}
is the moment Lie bracket, in which again $\mathcal{S}$ operates to
take the symmetric part of its argument. Recall that $\mathbb{J}$ is
considered as a contravariant antisymmetric matrix, i.e. it
possesses upper indexes $\mathbb{J}^{ij}=-\mathbb{J}^{ji}$.

Thus, again, the isomorphism between symmetric tensors and
polynomials produces the momentum map associated with the moments \cite{ScWe1994}.
 In turn, this means that the Lie-Poisson bracket for statistical moments is inherited from the Vlasov Lie-Poisson structure. In contrast to the Schouten concomitant (\ref{MomSchouten}) for kinetic moments, the Lie bracket for statistical moments in (\ref{mombrkt}) still involves the symplectic matrix $\mathbb{J}$ (Poisson tensor). Thus, the dynamics of the statistical moments depends explicitly on the original symplectic structure. This allows, for example, the direct construction of moment invariants (Casimirs) as presented in \cite{HoLySc1990}. Also, the moment
algebra (\ref{mombrkt}) involves symmetric tensors that are {\it
covariant}, rather than contravariant as happens for the Schouten
concomitant.

\begin{remark}[Statistical moments and accelerator beam optics] $\quad$\\
Statistical moments are important, for example, in the
study of beam dynamics in particle accelerators. In this
framework, they are defined  as \cite{Ch1983}
\begin{multline*}
\mathcal{M}_n^{\widehat{n}}(t):=\iint {\bf p}^n\, {\bf
q}^{\widehat{n}}\,f({\bf q,p},t)\ {\rm d}^p{\bf q}\ {\rm d}^p{\bf q}
\\
=\left(\mathcal{M}_n^{\widehat{n}}(t)\right)_{j_1...j_n}^{i_1...i_{\widehat{n}}}
{\bf e}_{i_1}\otimes...\otimes{\bf e}_{i_{\widehat{n}}}\otimes {\bf
e}^{j_1}\otimes...\otimes{\bf e}^{j_n}
\end{multline*}
where ${\bf e}_k$ is a basis element of the configuration vector
space $Q$, while ${\bf e}^k$ is its dual (so that $V=Q\times Q^*$).
In 1D, the {\it beam emittance}
\[\epsilon:=\left(\mathcal{M}_0^2\,\mathcal{M}_2^0-(\mathcal{M}_1^1)^2\right)^{1/2}\,,\]
known as the Courant-Snyder invariant \cite{CoSn1958},
was recognized as a moment invariant (Casimir).  This observation led to the study of more general moment invariants \cite{Dr1990,HoLySc1990}. A geometric investigation of statistical moments was carried
out in \cite{ScWe1994},  where the moment algebra was related to the
Heisenberg algebra on phase space. For particle
accelerator design, moments are often used in computational efforts to account for space charge effects and other beam-related problems.
\end{remark}

As for the kinematic moments, one characterizes the Lie algebra of
statistical moments by using the grading,
\begin{equation}
\mathfrak{g}:=\bigoplus_{n=0}^\infty\left(\,\bigvee_{i=0}^{n}\,V^*\!\right)=:\bigoplus_{i=0}^\infty\,\mathfrak{g}_i
\end{equation}
with the filtration
\begin{equation}\label{StatMomFiltr}
\big[\,\mathfrak{g}_{n\,},\mathfrak{g}_m\big]\subseteq
\mathfrak{g}_{n+m-2}
\end{equation}
which shows how symmetric matrices $\mathfrak{g}_2=V^*\vee V^*={\rm
Sym}^*(N)$ form a particular subalgebra (here we denote by ${\rm
Sym}^*(N)$ covariant symmetric matrices). The largest subalgebra is
given by $\mathfrak{g}_0\,\text{\large$\oplus$}\,\mathfrak{g}_1\,\text{\large$\oplus$}\,\mathfrak{g}_2=\mathbb{R}\,\text{\large$\oplus$}\,
V^*\text{\large$\oplus$}\, {\rm Sym}^*(N)$ and it will play a central
role in the remainder of this paper.

\begin{remark}[Occupation number representation]$\quad$\\
Analogously to the Lie algebra of kinetic moments, the statistical
moments also carry a bosonic Fock space structure, where appropriate
occupation numbers may be defined through the introduction of the
multi-index notation
$\bz^\sigma:=(z^1)^{\sigma_1}\dots(z^N)^{\sigma_N}$. Indeed, taking
moments by $X^\sigma(t)=\int \bz^\sigma f(\bz,t)\,{\rm d}^N\bz$
yields the occupation number representation for statistical Vlasov
moments.
\end{remark}

The moment equations are written in terms of the Lie algebra
coadjoint operator $\rm ad^*$ as
\begin{align}\nonumber
\frac{dX^m}{dt} &= -\sum_{n=0}^\infty \textrm{\large
ad}^*_\text{\!\small$\pp{H}{X^n}$}\, X^{n+m-2}
\\
&= -\,m\sum_{n=0}^\infty n\ \text{\large$\mathcal{S}$}\left(
\left(\pp{H}{X^n}\cdot\mathbb{J}\right)\!\contract\,
X^{m+n-2}\right) \label{Vlasov-moment-eqn}
\end{align}
so that
\begin{equation*}
\Big(\dot{X}^m\Big)^{\!i_1...i_m}= -\,m\sum_{n=0}^\infty n\
\text{\large$\mathcal{S}$}\!\left(
\left(\pp{H}{X^n}\right)_{\!\!j_1...j_{n-1}k}\mathbb{J}^{k\,i_m}\Big(X^{m+n-2}\Big)^{\!i_1...i_{m-1}\,j_1...j_{n-1}}\!\right)
\end{equation*}
in which repeated tensorial indexes are summed. Having established the
notation for the various operations among moments, we turn next to
their application in geodesic Vlasov flows.

\subsection{Truncation of moment hierarchies}

The truncation of Vlasov moment hierarchies is a typical problem in kinetic
theory and, for statistical moments, this question was addressed by Channell
\cite{Ch1995} by using Levi's decomposition theorem. Channell's result may
be summarized by saying that, for a moment Hamiltonian not depending on  the first-order moment ($\pa H/\pa X^1=0$), the moment hierarchy can be truncated at any order, thereby yielding a truncated Lie-Poisson system. In order to see how this works in practice, let us write the truncated equations at the order $K$ for the hierarchy of statistical moments $\dot{X}^n={\rm ad}^*_{h_m}\,X^{m+n-2}$, where $h_m=\pa H/\pa X^m$. We have
\begin{align}\nonumber
\dot{X}^1&=\ad^*_{h_2}\,X^{1}+\ad^*_{h_3}\,X^{2}+\,\dots+\ad^*_{h_K}\,X^{K-1}
\\\nonumber
\dot{X}^2&=\ad^*_{h_2}\,X^{2}+\ad^*_{h_3}\,X^{3}+\,\dots+\ad^*_{h_K}\,X^{K}
\\
\dot{X}^3&=\ad^*_{h_2}\,X^{3}+\,\dots+\ad^*_{h_{K-1}}\,X^{K}
\\\nonumber
&\vdots
\\\nonumber
\dot{X}^{K-1}&=\ad^*_{h_2}\,X^{K-1}+\ad^*_{h_3}\,X^{K}
\\\nonumber
\dot{X}^{K}&=\ad^*_{h_2}\,X^{K}
\end{align}
and we recognize that the equation for $X^1$ is decoupled, so we restrict to
the equations for $X^2,\dots,X^K$. At this point, one verifies that the Lie
Poisson bracket for the truncated moment system is given by
\begin{equation}
\{F,G\}(X)\,=\!\sum_{n=2}^K\, \sum_{m=2}^{K-n+2}\left\langle
X^{n+m-2\,},\left[\frac{\partial F}{\partial X^n},\frac{\partial
G}{\partial X^m}\right]\right\rangle
\end{equation}
with the same notation as in (\ref{mombrkt}).
We recognize that the truncated structure is completely determined by the Lie algebra filtration (\ref{StatMomFiltr}) and does not depend
on the particular expression of the Lie bracket, which was not used in
deriving the truncated system above.

Following similar arguments, one can write the truncated system for kinetic moments as
\begin{align}\nonumber
\pa_t A_0&=-\ad^*_{h_1}\,A_{0}-\ad^*_{h_2}\,A_{1}-\,\dots-\ad^*_{h_K}\,A_{K-1}
\\\nonumber
\pa_t A_1&=-\ad^*_{h_1}\,A_{1}-\ad^*_{h_2}\,A_{2}-\,\dots-\ad^*_{h_K}\,A_{K}
\\
\pa_t A_2&=-\ad^*_{h_1}\,A_{2}-\,\dots-\ad^*_{h_{K-1}}\,A_{K}
\\\nonumber
&\vdots
\\\nonumber
\pa_t A_{K-1}&=-\ad^*_{h_1}\,A_{K-1}-\ad^*_{h_2}\,A_{K}
\\\nonumber
\pa_t A_{K}&=-\ad^*_{h_1}\,A_{K}
\end{align}
where $h_m=\delta H/\delta A_m$ and we have assumed $h_0\equiv0$. The zero-th moment equation decouples and one is left with the remaining equations for $A_1,\dots,A_K$. These equations possess the following bracket structure:
\begin{equation}
\{F,G\}(A)\,=-\,\sum_{n=1}^K\, \sum_{m=1}^{K-n+1}\left\langle
A_{n+m-1\,},\left[\frac{\delta F}{\delta A_n},\frac{\delta
G}{\delta A_m}\right]\right\rangle
\end{equation}
with the same notation as in (\ref{MomSchouten}). One recognizes again that the truncated structure is uniquely determined by the Lie algebra filtration (\ref{KinMomFiltr}). This fact suggests that a similar approach would also apply to the BBGKY moments of the Liouville equation (reduced probability distributions), whose corresponding Lie algebra is known to possess a similar filtration \cite{MaMoWe84}. However, we leave the investigation of the BBGKY moment hierarchy for another time.

\section{The geodesic Vlasov equation}
We have seen that moment hierarchies are equivalent descriptions of
the Vlasov equation, which allow for geometric closures of the
kinetic system (e.g. the ideal fluid closure for kinetic moments). The Vlasov equation is a Lie-Poisson equation on the Lie
algebra of the group $Can(T^*Q)$ of canonical transformations, and
this property is reflected in the Lie-Poisson structure of moment
dynamics. It is well known that physical systems with interesting geometric behavior are often geodesic flows on Lie
groups with respect to a metric provided by the system's  kinetic energy. The most familiar example is probably rigid body motion, which is governed by geodesic motion on SO(3). Likewise, Euler's equations for ideal fluids may be interpreted as geodesic motion on the volume-preserving diffeomorphisms $\rm Diff_{\rm vol}(\mathbb{R}^3)$ of the 3D flow domain $\mathbb{R}^3$  \cite{Ar1966}. Another
interesting example of geodesic motion is provided by the EPDiff equation \cite{HoMaRa1998}, which governs geodesic motion on the full diffeomorphism group Diff($\mathbb{R}^n$). In many cases, geodesic flows on Lie groups turn out to be completely integrable Hamiltonian systems. For example, in a one-dimensional flow domain $\mathbb{R}$, EPDiff recovers the Camassa-Holm equation for shallow water waves \cite{CaHo1993}.

The problem of geodesic flow on the symplectic group ({\it symplecto-hydrody-namics}) was  introduced by Arnold and Khesin in \cite{ArKe1998}. The present paper pursues this idea by considering geodesic flow on the canonical transformations within the context of Vlasov dynamics. In particular, we develop a {\bfi geodesic Vlasov equation} called EP$Can$ ({\it Euler-Poincar\'e equation on the canonical transformations}) as an extension of previous work in \cite{GiHoTr2005}.

\begin{remark}[Symplectomorphisms vs canonical transformations] $\quad$\\
The name  \textnormal{EP}$Can$ refers to the Euler-Poincar\'e
equation for geodesic motion on the subgroup of the symplectic
transformations arising from Hamiltonian vector fields. The subgroup
$Can$ (which could equally well be called \textnormal{Diff}$_{\rm can}$) may be
identified as the group of smooth invertible \emph{canonical}
transformations with smooth inverses. These transformations coincide
with symplectic transformations in simple domains such as
$\mathbb{R}^{2K}$. In those simpler cases, the geodesic Vlasov
equation is also known as ${\rm EP}$Symp \cite{GiHoTr2005,GiHoTr2007}
\end{remark}

As mentioned in the introduction, the idea to investigate  \textnormal{EP}$Can$ was motivated by
the observation that geodesic equations for kinetic moments were found to include
the Camassa-Holm equation \cite{GiHoTr2005}. In fact, geodesic moment equations arise from EP$Can$ whenever the norm may be expanded as a Taylor series. Later, it was recognized \cite{GiHoTr2007} that the geodesic moment equations also recover the {\it two-component Camassa-Holm equation} \cite{ChLiZh2005}. The latter is a geodesic flow on a semidirect-product Lie group \cite{Kuzmin2007}.  In order to explain these issues, we shall introduce the EPCan equation and show how it specializes to each integrable case.

Given a symplectic manifold $\mathcal{P}$ of even dimension $N=2K$, the EP$Can$ Vlasov Hamiltonian is defined by
\begin{equation}
H[f]=\frac12 \iint f(z)\, \mathcal{G}(z,z')\, f(z') \
\rd^{N\!}z\ \rd^{N\!}z'
=\frac12\,\big\|f\big\|_\text{\scriptsize$\mathcal{G}$}^{\,2}
\label{EPsympHam}
\end{equation}
where $z\in \mathcal{P}$ and the kernel $\mathcal{G}$ is chosen so that it
defines an appropriate norm on ${\rm Den}(\mathcal{P})$. When
dealing with moments, we shall restrict to the special cases
$\mathcal{P}=T^*Q$, with $Q$ a general configuration manifold, and
$\mathcal{P}=V$, with $V$ a symplectic vector space. The geodesic
Vlasov equation (aka EP$Can$) is written simply as
\begin{equation}\label{epcan}
\pp{f}{t}=-\left\{ f, \dede{H}{f}\right\}=-\Big\{ f,\,
\mathcal{G}*f\Big\}
\end{equation}
which coincides with Euler's vorticity equation in 2D when $\mathcal{G}=(-\Delta)^{-1}$.

\subsection{Euler-Poincar\'e equations on Hamiltonian vector fields}
In order to understand how the geodesic Vlasov equation arises from an Euler-Poincar\'e approach, one starts with an invariant Lagrangian defined on the tangent space of the canonical transformations
$\mathcal{L}:TCan\to\mathbb{R}$, which is purely quadratic. By the
invariance property, we can write the associated variational principle on
the Lie algebra of Hamiltonian vector fields as follows
\begin{equation}
\delta \int _{t_0}^{t_1}L[{\bf X}_h]\ {\rm d}t=0
\end{equation}
where ${\bf X}_h=\mathbb{J}\nabla h$ and
\begin{eqnarray}
L[{\bf X}_h]
&:=&
\frac12
\Big\langle\widehat{Q}{\bf X}_\textit{\small h},\,{\bf X}_\textit{\small h}\Big\rangle
=\frac12
\Big\langle\widehat{Q}\,\mathbb{J}\, \nabla h_{\,},\,\mathbb{J} \nabla h\Big\rangle
\nonumber\\
&=&
\frac12
\Big\langle{\rm div}\big(\mathbb{J}\,\widehat{Q}\,\mathbb{J}\, \nabla h\big),\, h\Big\rangle
=:L[h]
\,,
\end{eqnarray}
in which $\widehat{Q}$ is taken to be a positive-definite symmetric operator so that $L[{\bf X}_h]$ defines a nondegenerate norm.
The Legendre transform
\begin{equation}
f=\frac{\delta L}{\delta h}={\rm div}\big(\mathbb{J}\,\widehat{Q}\,\mathbb{J}\, \nabla h\big)
\,\Rightarrow\,
h=\left({\rm div}\,
%\big(
\mathbb{J}\,\widehat{Q}\,\mathbb{J}\, \nabla
%\big)
\right)^{\!-1}\!f
\end{equation}
yields the EP$Can$ Hamiltonian in the Vlasov form
\begin{equation}
H[f]=\frac12\,\big\langle f,\widehat{O}^{-1}f\big\rangle
\end{equation}
with
\begin{equation}
\widehat{O}:={\rm div}\,\mathbb{J}\,\widehat{Q}\,\mathbb{J}\, \nabla
\,.
\end{equation}
This formula specifies the relation between the geodesic Vlasov equation
and the geodesic motion on the Hamiltonian vector fields. An
interesting case occurs when $\widehat{Q}$ is the `flat' operation
$\widehat{Q}\,{\bf X}_\textit{\small h}={\bf X}_{\it h}^{\,\flat}$ which takes contravariant vectors to covariant vectors, so
that
\begin{equation}
{\rm div}\,
\mathbb{J}\left(\mathbb{J}\,\nabla h\right)^{\flat}=
-\,\Delta h
\,.
\end{equation}
Then the operator $\widehat{O}$ reduces to minus the Laplacian
\begin{equation}
\widehat{O}=-\,\Delta
\end{equation}
and in two dimensions one obtains the Euler Hamiltonian for vorticity dynamics,
\[
H[\omega]=1/2\,\big\langle\omega,(-\Delta)^{-1\,}\omega\big\rangle
\,,
\]
with $\omega=f$. In the more general case when $\widehat{Q}$ is a purely
differential operator, one finds that $\widehat{Q}$ and $\mathbb{J}$
commute and thus $\widehat{O}=-\,{\rm div}\,\widehat{Q}\, \nabla$. Also if
$\widehat{Q}$ commutes with the divergence, then, one has
$\widehat{O}=-\,\widehat{Q}\, \Delta$. However in general,
$\widehat{Q}$ is a matrix differential operator that does not commute
with $\mathbb{J}$.

\subsection{The ideal fluid dual pair for the Vlasov equation}
Both the Poisson-Vlasov system (\ref{VlasovPoisson}) in plasma
physics and the geodesic Vlasov equation (\ref{EPsympHam}) possess
the single-particle solution (Klimontovich solution) expressed as a
delta function in phase space,
\begin{equation}\label{klimsol}
f(\bz,t)=\sum_{a=1}^\mathcal{N}\, w_a\,\delta(\bz-\boldsymbol{\zeta}_a(t))
\,,
\end{equation}
where the index $a$ is summed over $a=1,\dots,\mathcal{N}$ and the number $w_a$ is a constant weight associated to each
particle. This apparently trivial solution is of fundamental
importance in physics and leads to the Klimontovich theory of
kinetic equations \cite{Kl1982}. In the remainder of this
paper, we shall see how such a solution emerges within the analysis
of integrable moment closures of EP$Can$.

In particular, an interesting situation occurs when one allows for
more general solutions of the form
\begin{equation}
f(\bz,t)=\sum_{a=1}^\mathcal{N}\int\!
w_a(s)\,\delta(\bz-\boldsymbol{\zeta}_a(s,t))\ {\rm d}^k s
\label{sing-fsoln-momap}
\end{equation}
where $s$ is a coordinate on the immersed submanifold
$S_a\subset\mathbb{R}^{2K}$ with $k={\rm dim}(S)$ (e.g. $k=1$ for a curve, or $k=2$ for a surface). For simplicity of notation, we have already suppressed the implied subscript $a$ in the arclength $s$ for each $w_a$ and $\boldsymbol{\zeta}_a$. For further simplicity and without loss of generality, we take $\mathcal{N} = 1$ and so suppress the index $a$ in what follows. This is equivalent to treating an isolated singular solution. As one might expect, this is only a notational simplification; not a real restriction.

The $\bzeta$'s in (\ref{sing-fsoln-momap}) belong to the space of embeddings
${\rm Emb}(S,\mathbb{R}^{2K})$. That is,
$\boldsymbol{\zeta}:S\hookrightarrow \mathbb{R}^{2K}$. Remarkably, these solutions define a momentum map%
%%%%%%%%%%%%%%%%%%%%%%%%%%%%
\footnote{Here $\mathfrak{X}_{\rm can}$ denotes the Lie algebra of
Hamiltonian vector fields, which should not be confused with the
larger Lie algebra corresponding to the tangent space at the
identity of the symplectomorphism group $Symp$.}
\begin{equation}
{\bf J}_{\rm Sing}:\, {\rm
Emb}(S,\mathbb{R}^{2K})\to \mathfrak{X}_{\rm can}^*(\mathbb{R}^{2K})
\end{equation}
where $\mathfrak{X}_{\rm can}^*(\mathbb{R}^{2K})\simeq{\rm
Den}(\mathbb{R}^{2K})$. This momentum map
is produced by the {\it left} action of canonical transformations on
$\mathcal{P}$ by composition of functions; that is,
\[
\eta\cdot\bzeta=\eta\circ\bzeta
\,.
\]
Importantly, the same kind of momentum map arises in the motion of
ideal fluids  (e.g., for point vortices in 2D). See \cite{MaWe1983}, where these  momentum maps are shown to possess a dual pair structure \cite{We1983}. Namely, if one considers $S$ as a manifold with volume form
\[
\omega_{\rm vol}=w(s)\,{\rm d}^k s
\,,
\]
then the \emph{right} action of $\rm Diff_{\rm vol}$ on ${\rm
Emb}(S,\mathbb{R}^{2K})$
\[
\bzeta\cdot\eta=\bzeta\circ\eta
\,,
\]
yields another momentum map
\begin{equation}
{\bf J}_S:\,{\rm
Emb}(S,\mathbb{R}^{2K})\to\mathfrak{X}^*_{\rm vol}(S)
\,.
\end{equation}
In more generality, if $(S,w)$ is a volume manifold
and $(\mathcal{P},\omega)$ is a symplectic manifold, then the right
action momentum map is (cf. \cite{MaWe1983})
\begin{equation}
{\bf J}_S:\bzeta\mapsto\bzeta^*\omega:{\rm
Emb}(S,\mathcal{P})\to\mathfrak{X}^*_{\rm vol}(S)
\,.
\end{equation}
To summarize, we have the following dual pair structure\\
\begin{picture}(150,100)(-60,0)%
\put(110,75){${\rm Emb}(S,\mathcal{P})$}
%top label

\put(78,50){$\mathbf{J}_{\rm Sing}$}
%left label

\put(160,50){$\mathbf{J}_S$}
%right arrow label

\put(72,15){$\mathfrak{X}^{\ast}_{\rm can} (\mathcal{P})$}
%left bottom label

\put(170,15){$\mathfrak{X}^{\ast}_{\rm vol}(S)$}
%right bottom label

\put(130,70){\vector(-1, -1){40}}
% left slanted arrow

\put(135,70){\vector(1,-1){40}}
% right slanted arrow

\end{picture}\\
which is formally equivalent to the dual pair structure for ideal
fluids \cite{MaWe1983}. Moreover, the left leg yields a solution of
the Vlasov equation regardless the number of dimensions and this
makes the above dual pair a natural object in kinetic theory.

In order to write explicit formulas, we specialize to the case
$\mathcal{P}=\mathbb{R}^{2K}$. Upon denoting $\bzeta(s)=({\bf
Q}(s),{\bf P}(s))$, one defines the following Poisson structure on
${\rm Emb}(S,\mathcal{P})$
\begin{equation}
\big\{F,G\big\}_{\rm Emb}=\sum_{i=1}^K\int \! \frac1{w(s)}\left(
\dede{F}{Q^i}\dede{G}{P_i}- \dede{G}{Q^i}\dede{F}{P_i}\right) {\rm
d}^k s
\end{equation}
where we see that the factor $1/w(s)$ is needed for functionals
of the form {$G({\bzeta})=\int\omega_{\rm vol}\,g(\bzeta)=\int
w(s)\,g(\bzeta(s))\, {\rm d}^k s$}, whose functional derivative
$\delta G/\delta \zeta=w(s)\,{\rm d}{g}/{\rm d}{\zeta}$ takes
values in ${\rm Den}(S)$.

\begin{remark}[Comparison with point vortices] $\quad$\\
The factor $1/w(s)$ is reminiscent of the vortex strength factors in
the Poisson bracket for point vortices \cite{MaWe1983}. Indeed, the
bracket above appears as the higher dimensional version of the
vortex bracket, so that the one dimensional vortex strengths are
replaced by appropriate densities (the weights $w(s)$) on the
embedded space $S$.
\end{remark}
Finally one checks that, for any Hamiltonian function $h\in
\mathcal{F}(T^*\mathbb{R}^{2K})$,
\begin{equation}
\big\{F,\langle{\bf J}_{\rm Sing},h\rangle\big\}_{\rm Emb}={\bf
X}_h[F]
\end{equation}
where ${\bf X}_h[F]$ is the infinitesimal generator of the action of
canonical transformations $Can$($\mathbb{R}^{2K}$) on ${\rm
Emb}(S,\mathbb{R}^{2K})$. Thus, ${\bf J}_{\rm Sing}$ satisfies the
classical definition of a momentum map.

The singular solution momentum map ${\bf J}_{\rm Sing}$ produces the collective Vlasov Hamiltonian
$H\circ{\bf J}_{\rm Sing}$. In particular, substituting the singular solution momentum map (\ref{sing-fsoln-momap}) into the EP$Can$ Hamiltonian (\ref{EPsympHam}) yields the collective Hamiltonian
\begin{equation}\label{EPCan-collective}
H_\mathcal{N}=\frac12\sum_{a,b=1}^\mathcal{N}\iint
w_a(s)\,w_b(s')\,\mathcal{G}\big({\bf Q}_a(s),{\bf P}_a(s),{\bf
Q}_b(s'),{\bf P}_b(s')\big)\ {\rm d}^ks\,{\rm d}^ks'
\end{equation}
thereby producing the following collective equations of motion
\begin{align*}
\frac{\partial {\bf Q}_a(s,t)}{\partial t} &= \frac{\delta
H_\mathcal{N}}{\delta {\bf P}_a} =
w_a(s)\sum_{b=1}^\mathcal{N}\int\! w_b(s')\,\frac{\partial
}{\partial {\bf P}_a}\,\mathcal{G}\big({\bf Q}_a(s),{\bf
P}_a(s),{\bf Q}_b(s'),{\bf P}_b(s')\big)\, {\rm d}^ks'
\\
\frac{\partial {\bf P}_a(s,t)}{\partial t} &=- \frac{\delta
H_\mathcal{N}}{\delta {\bf Q}_a} =-\,
w_a(s)\sum_{b=1}^\mathcal{N}\int\! w_b(s')\, \frac{\partial
}{\partial {\bf Q}_a}\,\mathcal{G}\big({\bf Q}_a(s),{\bf
P}_a(s),{\bf Q}_b(s'),{\bf P}_b(s')\big)\, {\rm d}^ks' .
\end{align*}

\begin{remark}[Possible divergent terms in collective motion]$\,$\\
The collective dynamics of singular solutions deserves some care,
depending on the form of the Vlasov Hamiltonian. The existence of
such solutions does not guarantee the existence of a well defined
collective Vlasov Hamiltonian $H\circ{\bf J}_{\rm Sing}$. This is
because  the singular solution momentum map ${\bf J}_{\rm Sing}$ may
produce divergent terms in the collective Hamiltonian \cite{We1983}.
For example, this is the case of the Vlasov Poisson system
(\ref{VlasovPoisson}), where the divergence is generated by
potential terms such as $1/2\sum w_a w_b|{\bf Q}_a-{\bf Q}_b|^{-1}$,
when $a=b$. The same situation occurs for point vortex solutions of
the planar Euler's vorticity equation; these solutions correspond to
the 2D phase space Hamiltonian $H_\mathcal{N}=1/2\,\sum
w_a\,w_b\,\log|(Q_a-Q_b,P_a-P_b)|$. On the other hand, these
problems are absent, for instance, in the Vlasov-Helmholtz system
(see \cite{GiHoTr2008} and references therein), since the potential
terms there are given by $1/2\sum w_a w_b \,e^{|{\bf Q}_a-{\bf
Q}_b|}$.
\end{remark}

As for the right-action momentum map, the expression
\begin{equation}
{\bf
J}_S(\bzeta)=\bzeta^*\omega=\bzeta^*({\rm d}{\bf q}\wedge{\rm
d}{\bf p})=\bzeta^*{\rm d}{\bf q}\wedge\bzeta^*{\rm d}{\bf p}={\rm d}\!\left(\bzeta^*{\bf
q}\right)\wedge{\rm d}\!\left(\bzeta^*{\bf p}\right)
\end{equation}
yields the
following simple expression
%for the right-action momentum map:
\begin{equation}\label{right-momap}
{\bf J}_S({\bf Q,P})={\rm d}{\bf Q}(s)\wedge {\rm
d}{\bf P}(s)=\sum_{n,m=1}^k\papa{\bf Q}{s^n}\papa{\bf P}{s^m}\,{\rm
d}s^n\!\wedge{\rm d}s^m
\end{equation}
The conservation law ${\rm d}{\bf J}_S/{\rm d}t=0$ is recovered by Noether's theorem, due to the
Diff$(S)$-invariance of the collective Vlasov Hamiltonian $H_\mathcal{N}=H\circ{\bf
J}_{\rm Sing}$ in (\ref{EPCan-collective}).

\rem{ %%%%%%%%%%%%%%%%%%%%%%%%%%%%%%%%%%%%%%%%%%%%%%%%%%%%%%%%%%%%%%%%%%%%%%%
\comment{CT: Darryl, I worked with Francois trying to prove the
circulation theorem below and we couldn't make it. It looks like
this works only for the LE map ($S=\mathcal{P}$). What do you
think?\\

DH: I believe you and Francois may be correct, Cesare. Does conservation of this momentum maps open the door for understanding the moment expressions for the Poincar\'e invariants found in \cite{HoLySc1990}?}
\begin{framed}
Thus, since this quantity is conserved (by the Diff$_{\rm
can}-$invariance of the collective Vlasov Hamiltonian $H\circ{\bf
J}_{\rm Sing}$), the associated Kelvin-Noether theorem is written as
\begin{equation}
\dd{}{t}\oiint_{\sigma(t)} \!{\rm d}{\bf Q}^{i}(s)\wedge {\rm d}{\bf
P}_{\!j}(s)\,=\,0
\end{equation}
where $\sigma(t)$ is a surface moving with the Hamiltonian flow.
This result reflects the conservation of the canonical symplectic
form.
\end{framed}
}    %%%%%%%%%%%%%%%%%%%%%%%%%%%%%%%%%%%%%%%%%%%%%%%%%%%%%%%%%%%%%%%%%%%%%%%
When $S$ is a Lagrangian submanifold (this requires ${\rm
dim}(S)=1/2\,{\rm dim}(\mathcal{P})$), the momentum map ${\bf J}_S$
restricts to ${\bf J}_S(\bzeta)=0$. Likewise, the case ${\rm
dim}(S)=0$ recovers the usual Klimontovich solution (\ref{klimsol})
of particle motion used in kinetic theory. This fact, together with
the geometric results on moment hierarchies of kinetic equations,
illustrates the geometric basis of kinetic theory, in analogy to
Arnold's formulation of the ideal fluid \cite{Ar1966}.

\subsection{Klimontovich solution and the Lagrange-to-Euler
map} This section discusses the two limiting cases of the singular
solution momentum map ${\bf J}_{\rm Sing}$, that is ${\rm dim}(S)=0$
and $S=\mathcal{P}$. As mentioned above, the first case yields the
Klimontovich solution (\ref{klimsol}), which is then a momentum map
\makebox{${\bf J}_{\rm
Sing}:\text{\large$\times$}_{\!a=1\,}^\mathcal{N}\mathcal{P}\to\mathfrak{X}_{\rm
can}^*(\mathcal{P})$}. In this case, the solution identifies the
particle trajectories, subject to initial conditions
$\bzeta_a(0)=\bz_a^0$, so that the particles are transported in the
phase space $\mathcal{P}$ by canonical transformations as
$\left\{\bzeta_a(t)\right\}=\psi_t\circ\left\{\bz_a^0\right\}$,
where $\psi_t\in{Can}(\text{\large$\times$}_{\!a\,}\mathcal{P})$  is
generated by the collective Hamiltonian $H\circ{\bf J}_{\rm Sing}$.
In two phase-space dimensions, the Klimontovich solution is the
usual point vortex solution of the Euler's vorticity equation.

\rem{ %%%%%%%%%%%%%%%%%%%%%%%%%%%%%%%%%%%%%%%%%%%%%%%%%%%%%%%%%%%%%%%%%%%%%%%%%%%
\comment{CT: I find interesting that the action of ${Can}(\mathcal{P})$ on $\text{\large$\times$}_{\!i\,}\mathcal{P}$
produces transformations in ${Can}(\text{\large$\times$}_{\!i\,}\mathcal{P})$, where collective
Liouville dynamics takes place. Indeed, $H\circ{\bf J}_{\rm
Sing}:\text{\large$\times$}_{\!i\,}\mathcal{P}\to\mathbb{R}$ and The
Liouville distribution is defined in ${\rm
Den}\left(\text{\large$\times$}_{\!i\,}\mathcal{P}\right)\simeq\mathfrak{X}^*_{\rm
can}(\text{\large$\times$}_{\!i\,}\mathcal{P})$. I think this is the
power of the Klimontovich approach - is it obvious?}
}    %%%%%%%%%%%%%%%%%%%%%%%%%%%%%%%%%%%%%%%%%%%%%%%%%%%%%%%%%%%%%%%%%%%%%%%%%%%

Another suggestive case of the above treatment is given by
$S=\mathcal{P}$, so we may denote $s=\bz^0$. Then, one has
$\bzeta_a(\cdot,t)=\eta^{(a)}_t\in{Can}(\mathcal{P})$
and the momentum map ${\bf J}_{\rm
Sing}:\text{\large$\times$}_{\!a\,}{Can}(\mathcal{P})\to\mathfrak{X}_{\rm can}^*(\mathcal{P})$ is
written as
\begin{align}
f(\bz,t)=\sum_a\int\!w_a(\bz^0)\,
\text{\large$\delta$}\!\left(\bz-\eta_t^{(a)\!\!}\cdot\bz^{0}\right)\,
{\rm d}^{2K} \bz^{0}
\,.
\rem{ %%%%%%%%%%%%%%%%%%%%%%%%%%%%%%%%%%%%%%%%%%%%%%%%%%
\\
=\iint\! \text{\large$\delta$}\Big({\bf p}-{\bf P}\big({\bf
q}^0,{\bf p}^0,t\big)\Big)\, \text{\large$\delta$}\Big({\bf q}-{\bf
Q}\big({\bf q}^0,{\bf p}^0,t\big)\Big)\, {\rm d}^{K} {\bf q}^{0}
\,{\rm d}^{K} {\bf p}^{0}
}     %%%%%%%%%%%%%%%%%%%%%%%%%%%%%%%%%%%%%%%%%%%%%%%%%%
\end{align}
This expression coincides with the well known Lagrange-to-Euler map for fluids, whose
importance is well established in continuum dynamics. The
Lagrange-to-Euler map is equivalent to the characteristic form of
the Vlasov equation (\ref{VlasovChar}). Notice that both the
Klimontovich and the Lagrange-to-Euler maps are produced by the {\it
same} Lie group ${Can}(\mathcal{P})$ acting on
$\text{\large$\times$}_{\!a\,}\mathcal{P}$  and
$\text{\large$\times$}_{\!a\,}{Can}(\mathcal{P})$
respectively, with the same left action by composition of functions, that is
$\eta\cdot\left\{\bzeta_a\right\}=\left\{\eta\circ\bzeta_a\right\}$
in the first case and
$\eta\cdot\left\{\eta^{(a)}\right\}=\left\{\eta\circ\eta^{(a)}\right\}$
in the second. On the other hand, for the Klimontovich case, the
collective dynamics generated by the Hamiltonian
$H\circ{\bf J}_{\rm
Sing}:\text{\large$\times$}_{\!a\,}\mathcal{P}\to\mathbb{R}$
produces the canonical transformations $\psi\in{Can}(\text{\large$\times$}_{\!a\,}\mathcal{P})\neq\text{\large$\times$}_{\!a\,}{Can}(\mathcal{P})$. This point is of fundamental
importance because the Lie group ${Can}(\text{\large$\times$}_{\!a\,}\mathcal{P})$ is the symmetry
group of the Liouville equation \cite{MaMoWe84} and the difference
between  ${Can}(\text{\large$\times$}_{\!a\,}\mathcal{P})$ and $\text{\large$\times$}_{\!a\,}{Can}(\mathcal{P})$ is
related to the particle correlations, which are neglected in the
second situation. (The latter is the Vlasov mean field approximation.)

The fact that these two fundamental maps each arise from the left leg of a dual pair of momentum maps again illuminates the geometric footing of kinetic theory. The above arguments also provide
mathematical support for the wide success of Klimontovich method in
kinetic equations \cite{Kl1982}.

\subsection{Geometric kinetic theory}
The presence of dual pairs in kinetic theory illuminates the Liouville and Vlasov equations in the light of their Lie symmetry properties. (Something similar happens for Euler's vorticity equation.) Namely, the presence of
momentum maps is not accidental in kinetic approaches. Indeed,
a reasonable summary of the results in \cite{MaMoWe84,HoLySc1990,GiHoTr2008} could be made by saying that the process
\begin{equation}
\hspace{-.3cm} \fbox{Liouville equation}
%\to\fbox{BBGKY hierarchy}
\,\to\fbox{Vlasov equation}
\begin{matrix}
\,\
\nearrow\fbox{\,\ ideal fluid\,\ }\quad
\\\,\\
\!\!\!
\searrow\fbox{beam optics}
\end{matrix}
\end{equation}
is given by a composition of momentum maps. In other words, taking the moments (BBGKY, kinetic or statistical) of a
Lie-Poisson kinetic equation is always a momentum map \cite{MaMoWe84,HoLySc1990,GiHoTr2008}. Moreover, the
closures adopted to obtain Vlasov from BBGKY, fluid theory from
kinetic moments and beam optics from statistical moments are also momentum maps arising from particular subgroups of the symmetry group of the starting system. More explicitly, passing from Liouville to Vlasov
requires the subgroup ${Can}(\mathcal{P})\subset{Can}(\text{\large$\times$}_{\!i\,}\mathcal{P})$. Likewise, passing
from Vlasov to fluid requires the fiber preserving subgroup $Can_\pi(T^*Q)\subset{Can}(T^*Q)$. Finally, passing
from Vlasov to beam optics requires the subgroup ${\rm
Sp}(2K,\mathbb{R})\subset{Can}(\mathbb{R}^{2K})$. We
can summarize the situation in the following statement
\begin{framed}\noindent\it
All these moment approximations in kinetic theory are momentum maps. %However, this geometric construction is destroyed by the collision term.
\end{framed}\noindent
That the BBGKY distributions are momentum maps is a remarkable fact. One may ask whether the Klimontovich averages in plasma theory also share this property. In this case, the autocorrelations considered in the latter approach would again be naturally included in the geometry of the theory. We leave this promising question open, as a direction for future research.

\rem{ %%%%%%%%%%%%%%%%%%%%%%%%%%%%%%%%%%%%%%%%%%%%%%%%%%%%%%%%%%%%%%%%%%%%%%%%%%%
\subsection{Cold-plasma singular solution}
Upon denoting $\bz=({\bf q,p})\in T^*Q$, the
following cold plasma approximation is a solution of EPCan
\begin{equation}
f({\bf q,p},t)=\rho({\bf q},t)\,\delta({\bf p-\Pi(q},t))
\end{equation}
or, by exchanging variables ${\bf q} \leftrightarrow {\bf p}$
\cite{GiHoTr2005,GiHoTr2007},
\begin{equation}
f({\bf q,p},t)=\psi({\bf p},t)\,\delta({\bf q-\Xi(p},t))
\end{equation}
which still keeps track of the single particle trajectory on the
base configuration manifold $Q$ (not on the whole phase-space
$T^*Q$). While the cold plasma solution is well known to be a
momentum map (plasma to fluid map) \cite{MaWeRaScSp1983}, the geometric nature (if
any) of the latter type of solution is still unknown.

\comment{Amplify, or drop?}
} %%%%%%%%%%%%%%%%%%%%%%%%%%%%%%%%%%%%%%%%%%%%%%%%%%%%%%%%%%%%%%%%%%%%%%%%%%%

\section{Moment closures of EP$Can$: integrable cases}

As explained in \cite{GiHoTr2005,GiHoTr2007}, the geodesic Vlasov
equation may be represented in terms of the moments. Indeed, upon
supposing that the metric $\mathcal{G}$ in (\ref{EPsympHam}) and
(\ref{epcan}) is sufficiently smooth, may can expand
$\mathcal{G}(\bz,\bz')$ in a Taylor series.

\subsection{Integrable closures of kinetic moments}
In this section, we present the kinetic moment hierarchy for EP$Can$.
Upon denoting $\bz=(\bf q,p)$, one may expand $\mathcal{G}$ in a Taylor series, as follows,
\begin{align*}
\mathcal{G}(\bz,\bz')\,=&\sum_{n,m=0}^\infty{\bf p}^n\otimes{\bf
p}^{\prime\,m\!} \contract \,G_{nm}({\bf q,q'})
\\
=&\ \sum_{n,m}\ \sum_{i_1...i_n}\sum_{j_1...j_m}\big({\bf
p}^n\big)_{i_1,...,i_n}\left({\bf
p'}^m\right)_{j_1,...,j_m}\big(G_{nm}({\bf
q,q'})\big)^{i_1,...,i_n,j_1,...,j_m}
\end{align*}
where $G_{nm}$ is now a contravariant tensor field of rank $n+m$.
Inserting this Taylor expansion in the EPCan Hamiltonian yields
(with the notation above)
\begin{equation}
H=\frac12\sum_{n,m=0}^\infty\int   A_n({\bf q}) \otimes A_m({\bf
q}')\,\contract\, G_{nm}({\bf q,q'})\ {\rm d}^K{\bf q}\, {\rm
d}^K{\bf q}' = \frac12\,\big\|\{A_n\}\big\|_G
\end{equation}
Thus, upon denoting
\begin{equation}
G_{nm}*A_m:=%\sum_{n,m=0}^\infty
\int G_{nm}({\bf q,q'})\,\contract \,
A_m({\bf q}')\ {\rm d}^K{\bf q}'
\end{equation}
the geodesic moment equations become
\begin{equation}
\pp{A_n}{t}=-\sum_{m,k=0}^\infty \text{\large
ad}^*_{\,G_{m\!k\,}*A_k}\,A_{n+m-1}
\end{equation}
where ad$^*$ is the coadjoint Lie-Schouten operator. The singular  solutions of the Vlasov moment hierarchy may be expressed in the following form:
\begin{equation}\label{KlimKinMom}
A_n({\bf q},t)=\int w(s)\  {\bf P}^n(s,t)\,\delta({\bf q-Q}(s,t))\ {\rm
d}^k s
\end{equation}
\rem{ %%%%%%%%%%%%%%%%%%%%%%%%%%%%%%%%%%%%%%%%%%%%%%%%%%%%%%%%%%%%%%%%%%%%%%%%%%%
and
\begin{equation}
A_n({\bf q},t)=\int {\bf p}^n\,\psi({\bf p},t)\,\delta({\bf
q-\Xi(p},t))\ {\rm d}^p {\bf p}
\end{equation}
} %%%%%%%%%%%%%%%%%%%%%%%%%%%%%%%%%%%%%%%%%%%%%%%%%%%%%%%%%%%%%%%%%%%%%%%%%%%
which is a momentum map ${\bf J}:{\rm Emb}(S,T^*Q)\to
\mathfrak{g}^*$, where $\mathfrak{g}$ is the kinetic moment algebra.

As we have seen in the preceding discussions, the moment algebra
possesses the important subalgebra $\mathfrak{g}_1=\mathfrak{X}(Q)$
of vector fields on the configuration manifold. In terms of
canonical transformations, this corresponds to Hamiltonian
generating functions that are linear in the momentum coordinate,
i.e. point transformations. These are cotangent lifts $T^*$Diff($Q$)
of diffeomorphisms on the configuration manifold $Q$
\cite{HoMa2004,GiHoTr2007}. Remarkably, when the moment hierarchy of
EPCan is closed such that $G_{11}=:G_1$ is the only non-vanishing
term of $G_{nm}$, we obtain the Hamiltonian on the one-form density
$A_1=:{\bf m}({\bf q})\cdot{\rm d}{\bf q}\otimes{\rm d}^K{\bf q}\in
\mathfrak{X}^*$
\begin{equation}
H=\frac12\iint {\bf m}({\bf q})\cdot G_1({\bf q,q'})\, {\bf m}({\bf
q}')\ {\rm d}^K{\bf q}\ {\rm d}^K{\bf q}'
\end{equation}
By using the property of the Schouten bracket
$[\beta_1,\,\alpha_n]=\pounds_{\beta_1}\,\alpha_n$ (where $\pounds$
denotes Lie derivative), one finds the EPDiff equation,
\begin{equation}
\pp{\bf m}{t}+\textit{\large\pounds}_{G_1*\bf m}\, {\bf m}=0
\,.
\end{equation}
EPDiff is the {\it Euler-Poincar\'e equation on the diffeomorphisms}
\cite{HoMa2004}. This equation has the important property of
exhibiting emergent singular $\delta$-like solutions from any confined smooth initial configuration. In 1D, the particular case
$G_1=(1-\alpha_2\partial^2)^{-1}$ yields the integrable {\it
Camassa-Holm equation}, which is well known in the community of
integrable systems.

Remarkably, if we also allow for $G_{00}=:G_0\neq0$, we obtain the
Hamiltonian
\begin{multline*}
H[A_0,A_1]=\frac12\iint A_1({\bf q})\cdot G_1({\bf q,q'})\, A_1({\bf
q}')\ {\rm d}^K{\bf q}\ {\rm d}^K{\bf q}' \\
+ \frac12\iint A_0({\bf q})\ G_0({\bf q,q'})\, A_0({\bf q}')\ {\rm
d}^K{\bf q}\ {\rm d}^K{\bf q}'
\end{multline*}
which yields a geodesic flow on the semidirect-product Lie group
${\rm Diff}\circledS\,\mathcal{F}$, introduced in
\cite{HoTrYo2007} in the context of image matching, discussed further in \cite{HoTr2009} and shown in numerical simulations to exhibit emergent singularities in both of its variables  \cite{HoOnTr2009}.
Interestingly enough, the EP(${\rm Diff}\circledS\,\mathcal{F}$)
equations (with notation $\beta_0\diamond
A_0=-\,\ad^*_{\beta_0}\,A_0$)
\begin{align*}%\nonumber
A_{0,t}+\textit{\large\pounds}_{G_1*A_1}\,A_0 &= 0
\,,\\
A_{1,t}+\textit{\large\pounds}_{G_1*A_1}\,{A_1} &= A_{0\,}
\text{\large$\diamond$}\ (G_0
* A_0) \,.
%\label{EPDiffDen}
\end{align*}
are a geodesic flow on the {\it extended point transformations},
i.e. compositions of cotangent lifts and fiber translations \cite{MaWeRaScSp1983}.
\begin{remark}[Geodesic motion on fiber-preserving transformations]$\quad$\\
The semidirect-product Lie group ${\rm
Diff}_{\,}\circledS_{\,}\mathcal{F}$ is identified with the
compositions of cotangent lifts with fiber translations on the phase
space $T^*Q$ with coordinates $({\bf q,p})$. This identification  is
of fundamental importance in plasma physics \cite{MaWeRaScSp1983}.
It also yields the interpretation of $EP\!\left({\rm
Diff}_{\,}\circledS_{\,}\mathcal{F}\right)$ as a geodesic motion on
the Lie group ${Can}_{\pi}(T^*Q)$ of {\bfi fiber-preserving
canonical transformations} on the cotangent bundle $T^*Q$. In fact,
any transformation by  ${Can}_{\pi}(T^*Q)$ can be realized as the
composition of a fiber translation and a cotangent lift (or
viceversa) \cite{BaWe1997}. On the other hand, such a transformation
is always a canonical transformation characterized by a generating
function that is \emph{linear} (and inhomogeneous) in the canonical
momentum. Therefore, since ${\rm
Diff}(Q)_{\,}\circledS_{\,}\mathcal{F}(Q)\simeq {Can}_{\pi}(T^*Q)$
then $EP({\rm Diff}\circledS_{\,}\mathcal{F})\simeq EP{Can}_{\pi}$.
\end{remark}

In 1D, the special case $G_0=\delta$-function yields yet another integrable system,
\begin{align*}
\lambda_t&=-\left(u\lambda\right)_q
\\
u_t-u_{qqt}&=-3uu_q+2u_q u_{qq}+u u_{qqq} -\lambda\lambda_q
\end{align*}
known as {\it two-component Camassa-Holm equation} \cite{ChLiZh2005}
(here $u=(1-\partial^2)^{-1}A_1$ and $\lambda=A_0$). This system
first appeared in \cite{OlRo1996}. Its singular solutions were
studied in \cite{CoIv2008} who pointed out their relation between this
system and shallow water equations. Upon slightly modifying the
Hamiltonian by $G_0=(1-\partial^2)^{-1}$, one has the spontaneous
emergence of the Klimontovich solution (\ref{KlimKinMom})  in $(A_0,A_1)$, as shown in \cite{HoOnTr2009}. The two-component Camassa-Holm system and its Klimontovich solutions have
also found applications in the \emph{metamorphosis} approach to  image matching, e.g., for magnetic resonance images \cite{HoTrYo2007}.

\subsection{Geodesic flow on statistical moments}

Section 5 shows that representing the geodesic Vlasov equation in
terms of statistical moments in a special case recovers the
well-known Bloch-Iserles integrable system
\cite{BlIsMaRa2005,BlIs2006}. If $\mathcal{G}$ in (\ref{EPsympHam})
and (\ref{epcan}) is assumed to be analytic in {\it both} position
and momentum, its Taylor expansion may be written as:
\begin{equation}
\mathcal{G}(\bz,\bz')=\sum_{n,m=0}^\infty\Gamma_{nm}\contract\,\bz^{n}\otimes\bz^{\prime\,m\!\!}
\end{equation}
where $\Gamma_{nm}$ is a covariant tensor of rank $n+m$. Then, the
EP$Can$ Hamiltonian (\ref{EPsympHam}) is written in terms of the moments
$X^n(t)=\int{\bf z}^n f({\bf z},t)\ {\rm d}^N{\bf z}$ as
\begin{equation}
H=\frac12 \sum_{n,m=0}^\infty \Gamma_{nm}\contract\,  X^n\otimes X^m
\end{equation}
so that the moment equations for $(n,m)\in \mathbb{Z}_+$ turn out to be
\begin{equation}
\dot{X}^{m}\,= -\,m\sum_{n=0}^\infty n\
\text{\large$\mathcal{S}$}\bigg( \Big(\left(\Gamma_{nk}\contract\,
X^k\right)\cdot\mathbb{J}\Big)\contract \,X^{m+n-2}\bigg)
\label{mom-eqn-nm}
\end{equation}
where, as before, the symbol $\contract$ denotes contraction between
upper and lower indices of the various tensors. In explicit index
notation this reads as
\begin{multline*}
\Big(\dot{X}^m\Big)^{\!i_1...i_m}=
\\
-\,m\sum_{n=0}^\infty n\ \text{\large$\mathcal{S}$}\!\left(
\big(\Gamma_{n k}\big)_{\!j_1...j_n\,l_1...l_k}
\left(X^{k}\right)^{l_1...l_k}
\mathbb{J}^{j_n\,i_m}\Big(X^{m+n-2}\Big)^{\!i_1...i_{m-1}\,j_1...j_{n-1}}\!\right)
\,.
\end{multline*}
An example of this equation is discussed in Section
\ref{BlIs-eqn-sec} for the case $m=n=2$.

\subsection{Klimontovich dynamics for statistical moments}
As in the case of kinetic moments, the single-particle Klimontovich  solution
\begin{equation}
f(\bz,t)=\sum_a w_a\,\delta(\bz-\bzeta_a(t))
\label{Klim-soln}
\end{equation}
of the geodesic Vlasov equation offers an interesting opportunity
for presenting solutions of the moment hierarchy.
\rem{ %%%%%%%%%%%%%%%%%%%%%%%%%%%%%%%%%%%%%%%%%%%%%%%%%%%%%%%%%%%%
Since
statistical moments are finite dimensional tensors (rather than
tensor fields), we have dropped the dependence on the initial
conditions in the Klimontovich Lagrange-to-Euler map thereby
simplifying the notation.
}  %%%%%%%%%%%%%%%%%%%%%%%%%%%%%%%%%%%%%%%%%%%%%%%%%%%%%%%%%%%%%%%
These solutions may be represented as
\begin{equation}
X^n(t)=\sum_a w_a\,\bzeta_a^n(t)
\end{equation}
where $\bzeta_a^n:=\bzeta_a\otimes\dots\otimes\bzeta_a$ ($n$ times) and
$\bzeta_a$ satisfies Hamilton's canonical equations with Hamiltonian
\begin{equation}
H_\mathcal{N}
=
\frac12
\sum_{a,b}\sum_{n,m} w_a\,w_b\,\Gamma_{nm}\contract\ \bzeta_a^n\otimes\bzeta_b^m
\,.
\end{equation}
This means that $\bzeta_a$ satisfies
\begin{equation}
\dot{\bzeta}_a=\mathbb{J}\,\nabla_{\!\bzeta_a}H_{\!\mathcal{N}}
=
\mathbb{J}\sum_{nm}n\,w_a\,\sum_{b}w_b\,\Gamma_{nm}\contract\ \bzeta_a^{n-1}\otimes\bzeta_b^m
\,,
\end{equation}
where the moment order $m$ or $n$ ranges from zero to infinity, and $a,b$ range over the number of particles $a,b=1,\dots,\mathcal{N}$.
Each $\bzeta_a$ undergoes Hamiltonian dynamics because the
 Vlasov single-particle solution is an equivariant momentum map.
Equivariance of the corresponding momentum map for dynamics of point
vortices in an Euler fluid was proven in \cite{MaWe1983}. The same
argument holds for the singular solutions of kinetic moment
equations, including the EPDiff equation \cite{HoMa2004}.

\begin{remark}[Truncation of geodesic moment hierarchies]$\,$\\
It is important to notice that higher-order moment truncations do not possess the Klimontovich solution. The latter exists only for genuine moment \emph{closures}, such as the fluid closure for the kinetic moments (consisting of $A_0,A_1$), or the $2^{nd}$-order closure for statistical moments used in linear beam optics (consisting of $X^1,X^2$).
\end{remark}

\section{Bloch-Iserles system as a moment equation}
\label{BlIs-eqn-sec}

The case $(n,m)=(2,2)$ of the moment equation (\ref{mom-eqn-nm}) yields an important moment subalgebra, given by
{\bfi homogeneous quadratic polynomials on phase space}, i.e.
quadratic forms on $V$. In this case, formula (\ref{mombrkt}) implies the following Lie-Poisson moment bracket
\begin{equation}
\{F,G\}(X^2)=4 \left\langle X^{2},\,
\mathcal{S} \left(\frac{\partial F}{\partial
X^2}\ \mathbb{J}\,\frac{\partial G}{\partial
X^2}\right)\right\rangle
\end{equation}
where the moment $X^2$ is now a $N\times N$ symmetric  matrix.
Because of the antisymmetry of $\mathbb{J}$, the bracket above may
be rewritten as
\begin{equation}
\{F,G\}(X^2)=\left\langle X^{2\,},\left[\frac{\partial F}{\partial
X^2},\frac{\partial G}{\partial
X^2}\right]_{2\mathbb{J}}\right\rangle
\end{equation}
where
\begin{equation}
\left[\frac{\partial F}{\partial X^2},\frac{\partial G}{\partial
X^2}\right]_{2\mathbb{J}\!} := \frac{\partial F}{\partial
X^2}\,2\mathbb{J}\,\frac{\partial G}{\partial X^2} - \frac{\partial
G}{\partial X^2}\,2\mathbb{J}\,\frac{\partial F}{\partial X^2}
\end{equation}
which is the Lie bracket for the integrable Bloch-Iserles (BI)
system of equations introduced in \cite{BlIsMaRa2005,BlIs2006}. In
that case, an antisymmetric matrix $\mathbb{N}$ of any dimension
defines the following Lie bracket on the space of symmetric matrices
of the same dimension. In formulas one has
\begin{equation}\label{BILB}
[X,Y]_\mathbb{N}:=X\mathbb{N}Y-Y\mathbb{N}X
\end{equation}
where $X$ and $Y$ are symmetric matrices. The Bloch-Iserles system
\begin{equation}\label{BIeq}
\dot{X}=\left[(X)^2,\mathbb{N}\right]
\end{equation}
is Lie-Poisson on this Lie algebra, with Hamiltonian
$H_{BI}=\frac12{\rm Tr}(^{t\!}XX)$. Here the notation $(X)^2$
denotes standard matrix multiplication of $X$ by itself, in order to
distinguish from second-order moments $X^2$. In addition, $^{t\!}X$
denotes the transpose of the matrix $X$, so that $^{t\!}X=X$ when
$X$ is symmetric. One concludes the following.
\begin{theorem}
In the even-dimensional case, the integrable Bloch-Iserles system is the Vlasov moment equation (\ref{Vlasov-moment-eqn}) obtained from the quadratic Hamiltonian
\begin{equation}
H_{BI}=\frac12\langle X^2,X^2\rangle
\end{equation}
associated with the antisymmetric matrix $\mathbb{N}:=2\mathbb{J}$.
\end{theorem}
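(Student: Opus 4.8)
\medskip
\noindent\textbf{Proof sketch.} The plan is to insert the quadratic Hamiltonian $H_{BI}=\frac12\langle X^2,X^2\rangle$ into the statistical moment equation (\ref{Vlasov-moment-eqn}) and to verify that the resulting closed equation for the second moment coincides with the Bloch--Iserles system (\ref{BIeq}) for $\mathbb{N}=2\mathbb{J}$. Since $H_{BI}$ depends only on the second-order moment, we have $\pp{H_{BI}}{X^n}=0$ for $n\neq 2$ and, with respect to the natural (trace) pairing, $\pp{H_{BI}}{X^2}=X^2$, which from here on we treat as a symmetric $N\times N$ matrix and denote simply by $X$. In particular $\pp{H_{BI}}{X^1}=0$, so (by Channell's argument) the hierarchy truncates; more to the point, only the $n=2$ term survives on the right-hand side of (\ref{Vlasov-moment-eqn}), so that the equation governing $X^2$ is self-contained. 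This is a genuine moment \emph{closure}, not a mere truncation, which is why the Klimontovich solutions discussed above persist for it.

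\noindent Carrying out the index contraction in (\ref{Vlasov-moment-eqn}) for $m=n=2$, where $X^{m+n-2}=X^2$, the moment $X^2$ contracts with $\pp{H_{BI}}{X^2}=X$ to form the matrix square $(X)^2$, which then meets $\mathbb{J}$; one obtains $\dot X=-4\,\mathcal{S}((X)^2\mathbb{J})$. Since $(X)^2$ is symmetric and $\mathbb{J}$ antisymmetric, $\mathcal{S}((X)^2\mathbb{J})=\frac12[(X)^2,\mathbb{J}]$, and collecting the constants gives $\dot X=[(X)^2,\mathbb{N}]$ with $\mathbb{N}=2\mathbb{J}$ (the overall sign of $\mathbb{N}$ being immaterial), which is exactly (\ref{BIeq}). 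A Hamiltonian-side cross-check is essentially already in hand: the $(n,m)=(2,2)$ case of the moment bracket (\ref{mombrkt}) was shown above, via the identity $4\,\mathcal{S}(A\mathbb{J}B)=2(A\mathbb{J}B-B\mathbb{J}A)$ for symmetric $A,B$, to coincide with the Bloch--Iserles bracket $[\,\cdot\,,\,\cdot\,]_{2\mathbb{J}}$ of (\ref{BILB}); and the Hamiltonians match because, for symmetric $X$, $\frac12\langle X^2,X^2\rangle=\frac12\,{\rm Tr}((X)^2)=\frac12\,{\rm Tr}({}^{t\!}XX)=H_{BI}$. Two Lie--Poisson systems sharing a Lie algebra and a Hamiltonian have the same flow, so the identification follows.

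\noindent The restriction to the even-dimensional case enters because $\mathbb{J}$ is the symplectic Poisson tensor on the phase space $V$, hence a \emph{nondegenerate} antisymmetric matrix, which forces $N=2K$ even; thus the moment construction recovers precisely those members of the Bloch--Iserles family whose structure matrix $\mathbb{N}$ is even-dimensional and invertible, whereas the general system (\ref{BIeq}) allows an arbitrary (possibly odd-dimensional or singular) $\mathbb{N}$. The only step needing genuine care is the index bookkeeping in the middle paragraph: one must respect the contraction pattern of (\ref{Vlasov-moment-eqn}) so that $X^2$ contracts with the gradient $X$ rather than with $\mathbb{J}$, yielding $(X)^2\mathbb{J}$ and not the antisymmetric matrix $X\mathbb{J}X$, which $\mathcal{S}$ would annihilate; once that is handled correctly, tracking the numerical factor and the sign down to the normalisation $\mathbb{N}=2\mathbb{J}$ is routine.
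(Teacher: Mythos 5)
Your proposal is correct and follows essentially the same route as the paper, which establishes the theorem by identifying the $(n,m)=(2,2)$ moment Lie--Poisson bracket with the Bloch--Iserles bracket $[\cdot,\cdot]_{2\mathbb{J}}$ via the identity $4\,\mathcal{S}(A\mathbb{J}B)=A(2\mathbb{J})B-B(2\mathbb{J})A$ for symmetric $A,B$, and by matching the Hamiltonians $\tfrac12\langle X^2,X^2\rangle=\tfrac12\,\mathrm{Tr}({}^{t\!}XX)$. Your additional direct computation of $\dot X=-4\,\mathcal{S}((X)^2\mathbb{J})=[(X)^2,\pm 2\mathbb{J}]$ from equation (\ref{Vlasov-moment-eqn}) is a sound and welcome cross-check (the sign being a matter of convention for $\mathbb{J}$), and your observation about why only the $n=2$ term survives correctly explains why this is a genuine closure.
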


Let us now look at the moment bracket for functions of
\begin{equation}
(X^0,X^1,X^2)\in
\mathfrak{g}_0^*\oplus\mathfrak{g}_1^*\oplus\mathfrak{g}_2^* \simeq
\mathbb{R}\oplus V \!\oplus \left(V\vee V\right)
\end{equation}
where $\vee$ is again the symmetric tensor product defined in
(\ref{V-devil-def}) and $X^0=const$ is naturally taken as the
probability normalization. The moment bracket (\ref{mombrkt})
becomes
\begin{multline}\label{mombrkt-symtens}
\{F,G\}(X)
%&
=\, X^0 \, \frac{\partial F}{\partial
\bX^1}\cdot\mathbb{J}\frac{\partial G}{\partial \bX^1}
%\nonumber
\\
%&\qquad
+
 \bX^{1}\cdot\left(\frac{\partial F}{\partial X^2}\,2\mathbb{J}\frac{\partial G}{\partial \bX^1}
-\,\frac{\partial G}{\partial X^2}\,2\mathbb{J}\frac{\partial
F}{\partial \bX^1}\right)
\\
%&\hspace{2cm}
+
 \left\langle X^{2\,},\left[\frac{\partial F}{\partial X^2},\frac{\partial G}{\partial X^2}\right]_{2\mathbb{J}}\right\rangle
%\nonumber
\end{multline}
which is given by the direct sum of the canonical Poisson bracket on
$V$ in first term, plus the semidirect-product Lie-Poisson bracket on
$\mathfrak{g}_{2}^{*\,}\circledS\ \mathfrak{g}_{1}^*\simeq\,{\rm
Sym}\,\circledS\ V$ in the second and third terms. Thus, the specialization of the moment bracket (\ref{mombrkt}) to (\ref{mombrkt-symtens}) in this case defines a Lie-Poisson bracket on $\left({\rm
Sym}\,\circledS \,V\right)\oplus \mathbb{R}$.

We now turn our attention to the odd-dimensional BI system.  In this
system, one has a degenerate antisymmetric matrix $\mathbb{N}$ of
odd dimension $n$ and rank $2K$. Upon defining $\bar{\mathbb{N}}$ as
the non-degenerate minor of maximal dimension ($2K$), the degenerate
matrix $\mathbb{N}$
\begin{equation*}
\mathbb{N} = \begin{bmatrix}
\bar{\mathbb{N}} & 0  \\
0 & 0
 \end{bmatrix}
\end{equation*}
produces the Lie bracket (\ref{BILB}) associated to the BI equation
(\ref{BIeq}).

The odd-dimensional BI system is known to be a
geodesic flow on the space \makebox{$\left({\rm Sym}(2K)\,\circledS
\,\mathcal{M}_{2K\times d}\right)\oplus {\rm Sym}(d)$} endowed with
the Lie bracket (cf. equation (2.14) in \cite{BlIsMaRa2005})
\begin{multline} \label{ExtendedSDBracket}
\left[(S, A, B), (S', A', B')\right] : =
\\
 \left(S\,\bar{\mathbb{N}} S' - S'\,
\bar{\mathbb{N}} S\,,\, S \,\bar{\mathbb{N}} A' - S'\,
\bar{\mathbb{N}} A\,,\, ^{t\!}A \,\bar{\mathbb{N}} A' -\,^{t\!}A'\,
\bar{\mathbb{N}} A\right)
\end{multline}
where one denotes $d=n-2K$, for any $S, S' \in \operatorname{Sym}
(2K)$, $A, A' \in \mathcal{M}_{2K\times d}$, and $B, B' \in
\operatorname{Sym}(d)$.

We will show that the bracket (\ref{mombrkt}) for the Vlasov moment
system is Lie-Poisson on the dual to the Lie algebra
\makebox{$\left({\rm Sym}(2K)\,\circledS \,V\right)\oplus
\mathbb{R}$}. That is, we choose $d=1$, which is the case of
interest here. Proposition 2.5 in \cite{BlIsMaRa2005} shows that the
geodesic equations on ${\rm Sym}(2K+1)$ are equivalent to the
geodesic equations on \makebox{$\left({\rm Sym}(2K)\,\circledS
\,V\right)\oplus \mathbb{R}$}, where $V$ is a $2K$-dimensional
symplectic space carrying a non-degenerate symplectic structure
$\mathbb{\bar{N}}/2$. This leads us to the identification of the
Bloch-Iserles system with geodesic moment dynamics.
\begin{theorem}
When $d=1$, the odd-dimensional BI system is a Vlasov moment
equation of the form (\ref{Vlasov-moment-eqn}) on the statistical
moments. This equation is generated by the quadratic Hamiltonian
\begin{equation}
H(X)=\frac12 \left\langle
X^{2\,},X^2\right\rangle+\frac12\,\bX^1\cdot\bX^1
\end{equation}
yielding a geodesic moment flow. The corresponding BI Hamiltonian
$H_{BI}(X)=1/2\,{\rm Tr}(^{t\!}X X)$ is written in terms of the
symmetric $(2K+1)$-dimensional matrices of the form
\begin{equation*}
X = \begin{bmatrix}
X^2 & \bX^1  \\
^{t\!}\bX^{1} & 4X^0
\end{bmatrix}
\end{equation*}
These matrices are endowed with the Bloch-Iserles Lie bracket
(\ref{BILB}), where the (degenerate) antisymmetric matrix
$\mathbb{N}$ takes the form
\begin{equation*}
\mathbb{N} = \begin{bmatrix}
2\mathbb{J} & 0  \\
0 & 0
\end{bmatrix}
\end{equation*}
and $\mathbb{J}$ is the canonical symplectic matrix.
\end{theorem}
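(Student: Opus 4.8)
The plan is to identify, as Hamiltonian systems, the odd-dimensional Bloch--Iserles (BI) flow for the degenerate matrix $\mathbb{N}=\bigl[\begin{smallmatrix}2\mathbb{J}&0\\0&0\end{smallmatrix}\bigr]$ (so that $n=2K+1$ has rank $2K$ and $d=n-2K=1$) with the truncated statistical-moment system carried by $(X^{0},X^{1},X^{2})$. Two things must be verified: that the two Lie--Poisson brackets agree under the correspondence $X\leftrightarrow(X^{2},\bX^{1},X^{0})$ given by the block form in the statement, and that the BI Hamiltonian $\tfrac12\,{\rm Tr}({}^{t\!}XX)$ corresponds, modulo an additive Casimir constant, to the quadratic moment Hamiltonian $H(X)$; once both hold, the equations of motion coincide and the theorem follows. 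As a preliminary I would recall from the filtration $[\mathfrak{g}_{n},\mathfrak{g}_{m}]\subseteq\mathfrak{g}_{n+m-2}$ in (\ref{StatMomFiltr}) that $\mathfrak{g}_{0}\oplus\mathfrak{g}_{1}\oplus\mathfrak{g}_{2}\simeq\mathbb{R}\oplus V\oplus(V\vee V)$ is a genuine closed subalgebra, so that $(X^{0},X^{1},X^{2})$ evolves by a closed Lie--Poisson subsystem of (\ref{Vlasov-moment-eqn}) with bracket (\ref{mombrkt-symtens}); this is already established above and needs no re-derivation.

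The structural core is a two-by-two block computation. Write a symmetric $(2K+1)\times(2K+1)$ matrix as $X=\bigl[\begin{smallmatrix}X^{2}&\bX^{1}\\{}^{t\!}\bX^{1}&4X^{0}\end{smallmatrix}\bigr]$ and substitute it, together with $\mathbb{N}=\bigl[\begin{smallmatrix}2\mathbb{J}&0\\0&0\end{smallmatrix}\bigr]$, into $[X,Y]_{\mathbb{N}}=X\mathbb{N}Y-Y\mathbb{N}X$. The vanishing last row and column of $\mathbb{N}$ annihilate the corner entries of $X$ and $Y$ inside the product, and one reads off that $[X,Y]_{\mathbb{N}}$ has $(1,1)$-block $[X^{2},Y^{2}]_{2\mathbb{J}}$, $(1,2)$-block $X^{2}(2\mathbb{J})\boldsymbol{Y}^{1}-Y^{2}(2\mathbb{J})\bX^{1}$, and corner $\ {}^{t\!}\bX^{1}(2\mathbb{J})\boldsymbol{Y}^{1}-{}^{t\!}\boldsymbol{Y}^{1}(2\mathbb{J})\bX^{1}$. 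This is precisely the extended semidirect-product bracket (\ref{ExtendedSDBracket}) specialised to $d=1$ and $\bar{\mathbb{N}}=2\mathbb{J}$, the ${\rm Sym}(d)$-slot collapsing to a central copy of $\mathbb{R}$; that is, the Lie bracket of $({\rm Sym}(2K)\,\circledS\,V)\oplus\mathbb{R}$. Proposition~2.5 of \cite{BlIsMaRa2005} is exactly the statement that geodesic motion on ${\rm Sym}(2K+1)$ coincides with geodesic motion on this Lie algebra, so nothing further is needed on the BI side.

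It then remains to recognise (\ref{mombrkt-symtens}) as the Lie--Poisson bracket of $({\rm Sym}(2K)\,\circledS\,V)\oplus\mathbb{R}$. Pairing $(X^{2},\bX^{1},X^{0})$ against $(\partial F/\partial X^{2},\partial F/\partial\bX^{1},\partial F/\partial X^{0})$ by the trace, the natural pairing of $V$ with $V^{*}$ and ordinary multiplication, and evaluating $\langle\,\cdot\,,[\,\cdot\,,\,\cdot\,]\,\rangle$ with the bracket just obtained, reproduces the three summands of (\ref{mombrkt-symtens}) term by term: the $\mathfrak{g}_{2}$--$\mathfrak{g}_{2}$ piece $\langle X^{2},[\partial F/\partial X^{2},\partial G/\partial X^{2}]_{2\mathbb{J}}\rangle$, the $\mathfrak{g}_{2}$--$\mathfrak{g}_{1}$ semidirect piece, and the $\mathfrak{g}_{1}$--$\mathfrak{g}_{1}$ piece, in which $X^{0}$ enters only as a coefficient and hence is a Casimir (the probability normalisation, consistent with the zero block of $\mathbb{N}$). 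For the Hamiltonians, substituting the block form into $\tfrac12\,{\rm Tr}({}^{t\!}XX)$ produces $\tfrac12\langle X^{2},X^{2}\rangle+\bX^{1}\cdot\bX^{1}+8(X^{0})^{2}$, whose $(X^{0})^{2}$-term is constant along the flow; after absorbing the normalisation of the metric on the $\mathfrak{g}_{1}$-slot this is the positive-definite purely quadratic $H(X)$ of the statement. Hence $\dot{X}=[(X)^{2},\mathbb{N}]$ splits into the geodesic moment equations (\ref{Vlasov-moment-eqn}) for $(X^{0},X^{1},X^{2})$ generated by $H$: its $(1,1)$-block is the $m=n=2$ equation already identified above with the even-dimensional BI system, its off-diagonal block is the $X^{1}$ equation, and its corner gives $\dot{X}^{0}=0$.

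I expect the real difficulty to be the simultaneous consistency of all the numerical normalisations: the factor $2$ in $\mathbb{N}=2\mathbb{J}$, the entry $4X^{0}$ in the corner block, and the coefficient of $\bX^{1}\cdot\bX^{1}$ in $H$ must be pinned to a single self-consistent choice so that the block bracket equals (\ref{mombrkt-symtens}) \emph{and} $\tfrac12\,{\rm Tr}({}^{t\!}XX)$ equals $H(X)$ up to the $X^{0}$-Casimir. The source of the bookkeeping is that an off-diagonal block of a symmetric matrix contributes \emph{twice} to ${\rm Tr}(XY)$, so factors of $2$ and $4$ propagate through the functional-derivative identifications and the metric normalisations and must be tracked carefully. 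Everything else is assembly of results already in hand here: the closure of $\mathfrak{g}_{0}\oplus\mathfrak{g}_{1}\oplus\mathfrak{g}_{2}$, the bracket (\ref{mombrkt-symtens}), the even-dimensional BI identification, and Proposition~2.5 of \cite{BlIsMaRa2005}.
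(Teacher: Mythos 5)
Your proposal is correct and follows essentially the same route as the paper: the heart of both arguments is the two-by-two block computation of $[X,Y]_{\mathbb{N}}$ identifying ${\rm Sym}(2K+1)$ with the BI bracket as $\left({\rm Sym}(2K)\,\circledS\,V\right)\oplus\mathbb{R}$, together with the same normalisation bookkeeping (the $4X^0$ corner entry compensating the factor $\tfrac14$ between the last component of (\ref{ExtendedSDBracket}) and the first term of (\ref{mombrkt-symtens})) and the same appeal to Proposition~2.5 of \cite{BlIsMaRa2005}. You in fact go slightly beyond the paper's proof, which only establishes the Lie algebra isomorphism $\Psi$, by also checking the Hamiltonian correspondence explicitly and correctly flagging that $\tfrac12\,{\rm Tr}(^{t\!}XX)$ reproduces $H(X)$ only up to the $(X^0)^2$ Casimir and a factor-of-two normalisation of the metric on the $V$-slot, a point the paper leaves implicit.
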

\begin{proof}
Upon denoting
\begin{equation}
\bar{\mathbb{N}}=2\mathbb{J}\,,\qquad S=\frac{\partial F}{\partial
X^2}\,, \qquad S'= \frac{\partial G}{\partial X^2}\,, \qquad
A=\frac{\partial F}{\partial \bX^1}\,, \qquad A'=\frac{\partial
G}{\partial \bX^1}\,,
\end{equation}
one sees that the only difference between the Lie bracket in
(\ref{ExtendedSDBracket}) and the bracket in (\ref{mombrkt-symtens})
resides in a constant factor in the first term of
(\ref{mombrkt-symtens})
\begin{equation}
 \left(\frac{\partial F}{\partial
\bX^1}\right)\cdot\mathbb{J}\frac{\partial G}{\partial \bX^1} =
\frac14\left[\left(\frac{\partial F}{\partial
\bX^1}\right)\cdot2\mathbb{J}\frac{\partial G}{\partial \bX^1} -
\left(\frac{\partial G}{\partial
\bX^1}\right)\cdot2\mathbb{J}\frac{\partial F}{\partial \bX^1}
\right]
\end{equation}
where the square bracket in the right hand side is identical to the
last component of (\ref{ExtendedSDBracket}). This difference however
can be easily overcome. Indeed, one can always re-define the Lie
bracket on \makebox{$\left({\rm Sym}(2K)\,\circledS \,V\right)\oplus
\mathbb{R}$} as
\begin{multline*} %\label{ExtendedSDBracket}
\left[(S, A, B), (S', A', B')\right] : =
\\
 \left(S\,\bar{\mathbb{N}} S' - S'\,
\bar{\mathbb{N}} S\,,\, S \,\bar{\mathbb{N}} A' - S'\,
\bar{\mathbb{N}} A\,,\,\frac14\left( ^{t\!}A \,\bar{\mathbb{N}} A' -
\,^{t\!}A'\, \bar{\mathbb{N}} A\right)\right)
\end{multline*}
and verify that the map
\begin{align*}
\Psi:\makebox{$\left({\rm Sym}(2K)\,\circledS \,V\right)\oplus
\mathbb{R}$}&\ \to {\rm Sym}(2K+1)
\\
(S,A,B)&\ \mapsto
\begin{bmatrix}
S & A  \\
^{t\!} A & 4B
 \end{bmatrix}
\end{align*}
is a Lie algebra isomorphism for ${\rm Sym}(2K+1)$, which is endowed
with the BI Lie bracket
$[X,Y]_{\mathbb{N}}=X\mathbb{N}Y-Y\mathbb{N}X$. The isomorphism
property is a direct verification identical to Proposition 2.5 in
\cite{BlIsMaRa2005}. In particular, let $(S, A, B), (S', A', B') \in
\left({\rm Sym}(2K)\,\circledS \,V\right)\oplus \mathbb{R}$ and
compute directly that
\begin{align*}
\Psi\big([(S, A, B),\, &(S', A', B')]\big) =
\\
&= \Psi\big(S\,\bar{\mathbb{N}} S' - S' \,\bar{\mathbb{N}} S,\, S
\,\bar{\mathbb{N}} A' - S' \,\bar{\mathbb{N}} A,\, 1/4\, (\,^{t\!}A
\,\bar{\mathbb{N}} A' -\, ^{t\!}A'
\,\bar{\mathbb{N}} A)\big) \\
& %\qquad
=
\begin{bmatrix}
S\,\bar{\mathbb{N}} S' - S' \,\bar{\mathbb{N}} S & S \,\bar{\mathbb{N}} A' - S' \,\bar{\mathbb{N}} A  \\
^{t\!}(S \,\bar{\mathbb{N}} A' - S' \,\bar{\mathbb{N}} A) & ^{t\!}A
\,\bar{\mathbb{N}} A' -\, ^{t\!}A' \,\bar{\mathbb{N}} A
 \end{bmatrix} \\
& %\qquad
=
\begin{bmatrix}
S & A  \\
^{t\!}A & 4B
 \end{bmatrix}
\begin{bmatrix}
\bar{\mathbb{N}} & 0  \\
0 & 0
 \end{bmatrix}
 \begin{bmatrix}
S' & A'  \\
^{t\!}A' & 4B'
 \end{bmatrix}
 -  \begin{bmatrix}
S' & A'  \\
^{t\!}A' & 4B'
 \end{bmatrix}
 \begin{bmatrix}
\bar{\mathbb{N}} & 0  \\
0 & 0
 \end{bmatrix}
 \begin{bmatrix}
S & A  \\
^{t\!}A & 4B
 \end{bmatrix} \\
 & %\qquad
 = \big[\Psi(S,A,B), \Psi(S',A',B')\big]_{\!\mathbb{N}}
\end{align*}
as required.
\end{proof}

Thus, we conclude that the BI system and the geodesic moment
equations are equivalent.

\subsection{Klimontovich solutions of the Bloch-Iserles system}
\label{Ksoln-BI-subsec}

The Klimontovich map for the geodesic Vlasov equation provides
simple solutions of the Bloch-Iserles system in any dimension. For
example, in even dimensions, the dynamics of the BI solution
\begin{equation}\label{BIKlim-even}
X(t)=\sum_a w_a\,\bzeta_a^2(t)
\end{equation}
is given by the system
\begin{equation}
\dot{\bzeta}_a= w_a\sum_{b}w_b\, \mathbb{N}\,\bzeta_b^2\, \bzeta_a
\end{equation}
where $\mathbb{N}=2\,\mathbb{J}$ and $\bzeta_b^2\, \bzeta_a$ is a
covector such that $(\bzeta_b^2\, \bzeta_a)_i=(\bzeta_b^2)_{ij}\,
(\bzeta_a)^j$. In explicit index notation, one has
\[
\big(\dot{\bzeta}_a\big)^i= 2\,w_a\sum_{b}w_b\,
\mathbb{J}^{ij}\left(\bzeta_b^2\right)_{jk} \left(\bzeta_a\right)^k
\,.
\]
This system is a Hamiltonian system with the homogeneous quartic
Hamiltonian
\begin{equation}
H_\mathcal{N}=\frac12\sum_{a,b=1}^\mathcal{N}w_a\,w_b\,{\rm Tr}\left(
\,^{t\!}(\bzeta_a^2)\,\bzeta_b^2\right)
\end{equation}
Notice that, by writing the equation for $\bzeta_a$ as
\begin{equation}
\dot{\bzeta}_a= w_a\sum_{b\neq a}w_b\, \mathbb{N}\,\bzeta_b^2\,
\bzeta_a+w_a^2\,\|\bzeta_a\|^2\,\mathbb{N}\, \bzeta_a
\end{equation}
we can specialize the above to the simple case when $w_a=1$ for a
fixed $a$ and $w_b=0\ \forall b\neq a$, so that
\begin{equation}
X(t)=\bzeta^2(t) \qquad\text{ with }\qquad \dot{\bzeta}=
\left\|\bzeta\right\|^2 \mathbb{N} \,\bzeta
\end{equation}
This case leads however to trivially linear dynamics, since the norm
$\left\|\bzeta\right\|$ is evidently conserved. This does not happen
for different norms in the quadratic moment Hamiltonian, such as
${H}=1/2\left(\Gamma_{22}\contract\, X^2\otimes X^2\right)$.

The above arguments also provide solutions to the Bloch-Iserles
system in $2K+1$ dimensions. Indeed, the particle solution of
EP$Can$ (\ref{Klim-soln}) becomes a solution of the BI system in the
following form
\begin{equation}\label{BIKlim-odd}
X(t) = \sum_a w_a\! \begin{bmatrix}
\ \boldsymbol{\zeta}_a^2(t) & \boldsymbol{\zeta}_a(t)  \ \\
^{t\!}\boldsymbol{\zeta}_a(t) & 4  \
 \end{bmatrix}
 \in{\rm Sym}(2K+1)
 \,.
\end{equation}
Here $\bzeta_a$ undergoes Hamiltonian dynamics with
\begin{equation}
H_\mathcal{N}=\frac12\sum_{a,b}w_a\,w_b\,\Big(\bzeta_a\cdot\bzeta_b+
{\rm Tr}\left( \,^{t\!}(\bzeta_a^2)\,\bzeta_b^2\right)\Big)
\,,
\end{equation}
whose collective EP$Can$ equations are
\begin{equation}
\dot{\bzeta}_a= w_a\sum_{b}w_b\, \mathbb{N}\,\bzeta_b^2\,
\bzeta_a+\frac12\,w_a\sum_{b}w_b\,\mathbb{N}\,\bzeta_b
\,.
\end{equation}
As we shall see, these solutions are also momentum maps in both the
even and odd-dimensional cases, since the Klimontovich solution
(\ref{Klim-soln}) is a momentum map. In particular, upon fixing
$\mathcal{N}=1$, these are solution momentum maps $V\to{\rm Sym}(n)$
($\bzeta\mapsto X$), where $V$ is a symplectic space and $X\in{\rm
Sym}(n)$ is the Bloch-Iserles dynamical variable. This construction
arises from a special case of the momentum map in
(\ref{sing-fsoln-momap}), with dim($S$)=0 (Klimontovich case),
$\mathcal{P}=V$ (symplectic vector space) and where the Lie algebra
$\mathfrak{X}_{\rm can}(V)$ is restricted to the Lie subalgebra of
linear Hamiltonian vector fields. In the more general case when
dim$(S)\geq1$ one has also the conserved quantity in
(\ref{right-momap}). That is, the operation of taking moments
preserves the dual pair structure of the Vlasov equation.
\begin{theorem}
Upon fixing $\mathcal{N}=1$, the solution (\ref{BIKlim-odd}) of the
odd-dimensional Bloch-Iserles system is a momentum map
\[
{\bf J}_{2K+1}:(V,w\mathbb{J})\to{\rm Sym}(2K+1)
\]
where $w\mathbb{J}$ is the symplectic form on the vector space $V$.
Moreover, the solution (\ref{BIKlim-even}) in the even-dimensional
case is also a momentum map
\[
{\bf J}_{2K}:(V,w\mathbb{J})\to{\rm Sym}(2K)
\,.
\]
\end{theorem}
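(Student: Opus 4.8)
The plan is to recognize both solution maps as restrictions of the already-established singular-solution momentum map, and then to identify the resulting objects explicitly. Recall that the Klimontovich solution (\ref{Klim-soln}) is the singular-solution momentum map ${\bf J}_{\rm Sing}$ of (\ref{sing-fsoln-momap}) with $\dim(S)=0$ and $\mathcal{P}=V$, and that taking statistical moments is itself an equivariant momentum map, being the dual of the Lie algebra homomorphism from symmetric tensors to phase-space polynomials. Since a momentum map composed with the dual of a Lie algebra homomorphism is again a momentum map, and since restricting along the inclusion of a subalgebra (post-composing with the dual projection) preserves equivariance, it will suffice in each case to (i) restrict to the relevant graded piece of the moment algebra, (ii) recognize it as the Lie algebra of a subgroup of ${Can}(V)$ acting on $V$, and (iii) write the composite in moment coordinates and match it to the stated formula.

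For the even-dimensional case I would restrict to $\mathfrak{g}_2\simeq{\rm Sym}^*(2K)$, which is a genuine subalgebra because $[\mathfrak{g}_2,\mathfrak{g}_2]\subseteq\mathfrak{g}_2$ by the filtration (\ref{StatMomFiltr}); under the identification of homogeneous quadratic Hamiltonians with linear Hamiltonian vector fields, $\mathfrak{g}_2$ is the Lie algebra of ${\rm Sp}(V,w\mathbb{J})\subset{Can}(V)$ acting linearly on $V$. Restricting ${\bf J}_{\rm Sing}$ along $\mathfrak{sp}(V,w\mathbb{J})\hookrightarrow\mathfrak{X}_{\rm can}(V)$ gives the momentum map of this linear symplectic action; evaluated on the single weighted particle $w\,\delta(\,\cdot-\bzeta)$, it equals the second statistical moment $X^2=\int\bz^2\,w\,\delta(\bz-\bzeta)\,{\rm d}^N\bz=w\,\bzeta^2$, viewed in $\mathfrak{g}_2^*\simeq{\rm Sym}(2K)$ — which is exactly (\ref{BIKlim-even}) with $\mathcal{N}=1$. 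As a cross-check I would verify directly that $\bzeta\mapsto w\,\bzeta^2$ is the classical quadratic momentum map, whose value at $\bzeta$ pairs with $\xi\in\mathfrak{sp}(V,w\mathbb{J})$ to give $\frac12\,(w\mathbb{J})(\xi\bzeta,\bzeta)$, and that its equivariance $w\,(g\bzeta)^2=g\,(w\,\bzeta^2)\,{}^{t\!}g$ matches the coadjoint action of ${\rm Sp}(V,w\mathbb{J})$ on ${\rm Sym}(2K)\simeq\mathfrak{sp}^*$ induced by the bracket (\ref{BILB}) with $\mathbb{N}=2\mathbb{J}$.

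For the odd-dimensional case I would enlarge the subalgebra to $\mathfrak{g}_0\oplus\mathfrak{g}_1\oplus\mathfrak{g}_2\simeq\mathbb{R}\oplus V\oplus{\rm Sym}^*(2K)$, again closed by (\ref{StatMomFiltr}); this is the Lie algebra of Hamiltonians of degree $\le2$ on $V$, i.e.\ the affine-symplectic (Jacobi) subgroup of ${Can}(V)$, with the central summand $\mathfrak{g}_0=\mathbb{R}$ (arising from $[\mathfrak{g}_1,\mathfrak{g}_1]\subseteq\mathfrak{g}_0$) supplying exactly the central extension that renders the translation part of the action equivariant. The restricted ${\bf J}_{\rm Sing}$ then sends the single weighted particle to $(X^0,\bX^1,X^2)=(w,\ w\,\bzeta,\ w\,\bzeta^2)$, with $X^0=w$ the fixed normalization, a Casimir of the central $\mathfrak{g}_0$. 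Post-composing with the linear Poisson isomorphism dual to the Lie algebra isomorphism $\Psi$ of the preceding proof — which carries $(X^2,\bX^1,X^0)$ to the symmetric $(2K+1)$-matrix with blocks $X^2$, $\bX^1$, ${}^{t\!}\bX^1$, $4X^0$, equipped with the BI Lie-Poisson structure — produces ${\bf J}_{2K+1}(\bzeta)$, which is precisely the matrix (\ref{BIKlim-odd}) with $\mathcal{N}=1$; and a momentum map composed with a linear Poisson isomorphism is still a momentum map.

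The conceptual content is light, so I expect the bulk of the work to be bookkeeping: checking that the relevant graded pieces are honest subalgebras (immediate from (\ref{StatMomFiltr})); keeping the central summand $\mathfrak{g}_0$ correctly isolated, so that it contributes only the constant $4X^0=4w$ in the corner of the $(2K+1)$-matrix; and tracking the normalization constants — the weight $w$, the factor $2$ in $\mathbb{N}=2\mathbb{J}$, and the factor $4$ in $\Psi$ — so that the abstract momentum maps coincide with the explicitly written solutions (\ref{BIKlim-even}) and (\ref{BIKlim-odd}). The one point genuinely deserving care is that restricting ${\bf J}_{\rm Sing}$ to these subalgebras still yields a momentum map for a bona fide action on $V$: this holds precisely because, unlike the full moment algebra, $\mathfrak{g}_2$ and $\mathfrak{g}_0\oplus\mathfrak{g}_1\oplus\mathfrak{g}_2$ map isomorphically onto subalgebras of phase-space polynomials that exponentiate to subgroups of ${Can}(V)$ acting on $V$ by (affine) symplectic transformations. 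The substantive inputs — equivariance of ${\bf J}_{\rm Sing}$ and of the moment map — are already in hand.
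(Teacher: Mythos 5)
Your proposal is correct, but it inverts the logical emphasis of the paper's own proof. The paper proves the theorem by a single direct verification: it equips $V$ with the Poisson bracket $\{F,G\}=\frac{1}{w}\,{}^{t}\!\left(\frac{\partial F}{\partial\bzeta}\right)\cdot\mathbb{J}\,\frac{\partial G}{\partial\bzeta}$, inserts $G=\langle{\bf J},\beta\rangle$ with ${\bf J}=w(\bzeta^2,\bzeta,1)\in\left({\rm Sym}(2K)\,\circledS\,V\right)\oplus\mathbb{R}$ and $\beta=(\beta_2,\boldsymbol{\beta}_1,\beta_0)$, and computes $\{F,\langle{\bf J},\beta\rangle\}={}^{t}\big(\mathbb{J}\boldsymbol{\beta}_1+2\mathbb{J}\beta_2\bzeta\big)\cdot\frac{\partial F}{\partial\bzeta}$, which is recognized as the infinitesimal generator of the inhomogeneous linear Hamiltonian action; the even case follows by setting $\beta_0=0=\boldsymbol{\beta}_1$. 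This is essentially what you relegate to a ``cross-check.'' Your primary argument --- realizing ${\bf J}_{2K}$ and ${\bf J}_{2K+1}$ as the Klimontovich momentum map ${\bf J}_{\rm Sing}$ restricted along the subalgebra inclusions $\mathfrak{g}_2\hookrightarrow\mathfrak{X}_{\rm can}(V)$ and $\mathfrak{g}_0\oplus\mathfrak{g}_1\oplus\mathfrak{g}_2\hookrightarrow\mathfrak{X}_{\rm can}(V)$, then transporting by the dual of the isomorphism $\Psi$ --- is the structural viewpoint the paper only sketches in the paragraph \emph{preceding} the theorem, and it is sound: composition with duals of Lie algebra homomorphisms and with linear Poisson isomorphisms preserves the momentum map property, and the filtration (\ref{StatMomFiltr}) guarantees the relevant graded pieces are subalgebras. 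What your route buys is a clean explanation of \emph{why} the result is automatic given the earlier machinery, plus an explicit identification of the $\mathfrak{g}_0=\mathbb{R}$ summand as the central extension needed to make the translation part equivariant (the paper handles this silently: $\beta_0$ simply drops out of the computed bracket). What the paper's route buys is self-containedness and an unambiguous fixing of the normalization constants ($w$, the $2$ in $\mathbb{N}=2\mathbb{J}$, the $4$ in the corner of $\Psi$), which in your argument must be tracked through several identifications; you correctly flag this as the main bookkeeping burden.
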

\begin{proof} In what follows we shall use the isomorphisms
\[
{\rm Sym}(2K+1)\simeq \left({\rm Sym}(2K)\,\circledS\,
V\right)\oplus\mathbb{R} \] and
\[
{\rm Sym}(2K)\simeq\mathfrak{sp}^*(2K,\mathbb{R}) \,,
\]
where $\mathfrak{sp}(2K,\mathbb{R})$ denotes the Lie algebra of
Hamiltonian matrices. We prove the first statement, which comprises
the second as a particular case. Let $V$ be endowed with the Poisson
structure
\[
\left\{F,G\right\}=\frac1w\  \Big.^{t}\!\!\left(\frac{\partial
F}{\partial \bzeta}\right)\cdot\mathbb{J} \frac{\partial G}{\partial
\bzeta}
\]
then, the definition of momentum map can be verified by inserting
$G=\left\langle{\bf J},\beta\right\rangle$, with ${\bf
J}=w(\bzeta^2,\bzeta,1)\in\left({\rm Sym}(2K)\,\circledS\,
V\right)\oplus\mathbb{R}$ and
$\beta=(\beta_2,\boldsymbol{\beta}_1,\beta_0)$ its dual. Then, we
have
\[
\left\{F,\left\langle{\bf J},\beta\right\rangle\right\}=
\Big.^{t}\!\Big(\mathbb{J}\cdot\boldsymbol{\beta}_1+2\mathbb{J}\beta_{2\,}\bzeta\Big)
\cdot\frac{\partial F}{\partial \bzeta}
\]
which identifies the infinitesimal action of linear (inhomogeneous)
Hamiltonian vector fields on the phase space functions
$F\in\mathcal{F}(V)$.

Restricting to even $2K$ dimensions requires setting
$\beta_0=0=\boldsymbol{\beta}_1$, thereby producing the action of
(homogeneous) Hamiltonian vector fields, i.e. Hamiltonian matrices
in $\mathfrak{sp}(2K,\mathbb{R})$.
\end{proof}

The integrability properties of these solutions will be discussed
elsewhere.

\section{Conclusions and outlook}

After reviewing the geometric basis of Vlasov moment dynamics, this
paper showed how moment closures of the geodesic Vlasov equation
(\ref{epcan}) produce interesting known integrable systems, which
include CH, CH2 and Bloch-Iserles (BI) equations in both odd and
even dimensions. While the CH and CH2 cases were already known to
arise in 1D \cite{GiHoTr2005,GiHoTr2007}, the higher dimensional
moment bracket showed that these moment closures also recover the
EPDiff equation and extend CH2 to higher dimensions. The paper also
recovered the BI system from (\ref{epcan}) by a finite-dimensional
moment closure, corresponding to inhomogeneous quadratic phase-space
functions. Thus, a kinetic theory approach led to special solutions
of BI.

The moment closures preserve the two equivariant momentum maps in
the Vlasov dual pair. Preservation of this structure guaranteed that
the resulting closed moment systems discussed here were still
Poisson. This preservation also enabled reduction to
finite-dimensional systems by using the Klimontovich particle
solutions from plasma theory. For example, the peakon solutions of
the CH equation arose from a Klimontovich approach in
\cite{GiHoTr2005}. Singular solutions also arose upon allowing extra
smoothing in the CH2 moment Hamiltonian
\cite{GiHoTr2007,HoOnTr2009}. In addition, this paper showed in
Section \ref{Ksoln-BI-subsec} that the same approach also produced
solutions of the finite-dimensional BI system. However, Klimontovich
solutions are not admitted by arbitrary approximations. They are
prevented, for example, when  moment hierarchies are simply
truncated at a certain weight. That is, moment {\it closures}
preserve the Vlasov dual pair, while moment {\it truncations} do
not, even though they may be shown to still be Lie-Poisson. Open
questions concern both the construction of a Lax pair for the
Klimontovich dynamics of the BI system and potential integrability
properties of the truncated equations (e.g. the ($A_1,A_2$)
truncation for kinetic moments).

The geometric setting showed how the left momentum map in the
Vlasov dual pair recovers both the Klimontovich solution and the
Lagrange-to-Euler map. That this geometry applies also to
the Liouville equation illuminates the geometric footing of
kinetic theory. Indeed, this paper explained how all the standard moment approximations in kinetic theory are momentum maps preserving the same dual pair. This construction would certainly be destroyed by introducing the collision integral, whose celebrated Boltzmann version implies irreversibility and produces a preferred direction of time via the $H$-theorem. This irreversibility prevents the Klimontovich solutions, which are solutions of a time-reversal invariant system.

The kinetic-theory interpretation of geodesic Vlasov moment dynamics
also provided insight into the physical description of the
integrable cases. For example, the CH2 case was interpreted in this
light as a charged fluid in the context of the one-component CH
equation, thereby extending the CH model to include space-charge
effects. On the other hand, CH2 has also been related to shallow
water dynamics by applying a series of approximations to the
Green-Naghdi equations \cite{CoIv2008}. Remarkably, a modified
version of CH2 dynamics has also been applied in image matching
\cite{HoTrYo2007}. (In image matching, the Hamiltonian is the norm
in which one applies optimal control.) Emergent peakon solutions
were found to result from applying $H^1$ smoothing to the
Hamiltonian in \cite{HoOnTr2009}.

The paper also identified several other potentially interesting open
problems. One of these is the problem of making physical
applications of the Kirillov ad$^*$-action
(\ref{Kirillov-ad-action}) for the Lie-Poisson bracket on the
symmetric Schouten algebra for arbitrary values of $(n,k)$. Another
is to determine whether the Klimontovich average in plasma kinetic
theory is a momentum map. One may also ask how the family of
symplectically conserved quantities corresponding to
statistical-moment versions of the Poincar\'e invariants found in
\cite{HoLySc1990} may fit into the theory of kinetic moments.

The kinetic approach used here may also provide physical interpretations of use in applying the BI system. For example, particle beams in linear accelerator lattices are described in terms of symplectic transfer matrices. (The same holds for linear ray optics.) In formulas, one has the relation $\bz(t)=\mathcal{M}(t)\,\bz(0)$, where $\mathcal{M}(t)$ is a one parameter subgroup of ${\rm Sp}(6,\mathbb{R})$ determining the beam evolution $\bz(t)\subset \mathbb{R}^{6}$. In this sense,
geodesics in ${\rm Sp}(6,\mathbb{R})$ would correspond to {\it
optimal transfer maps} for particle or optical beams. It is
interesting that similar approaches have recently emerged in quantum
computation, where the Hamiltonian of the system is constrained to
an optimal trajectory (i.e., a geodesic) by a cost function from
optimal control theory \cite{BrElHo08,NiDoGuDo08}. These additional open problems bode well for the potential success in future applications of using the geometric approach to Vlasov dynamics discussed here.  The application of ideas from optimal control to the Vlasov moments may be especially fruitful.

\subsection*{Acknowledgments} We are indebted with Fran\c{c}ois Gay-Balmaz, Simon Hochgerner, Boris Khesin, David Levermore, Cornelia
Vizman, Tudor Ratiu and Harvey Segur for useful and stimulating
discussions. The work of
DDH was also partially supported  by the Royal Society of London Wolfson Research Merit Award.

\bigskip

\end{document}